\newtheorem{theorem}{Theorem}
\newtheorem{corollary}[theorem]{Corollary}
\newtheorem{definition}[theorem]{Definition}
\newtheorem{lemma}[theorem]{Lemma}
\newcommand{\SW}{\mathrm{SW}}
\newcommand{\pro}{\textup{\textsc{OptimalSequence}}}
\newcommand{\Sn}{\mathcal{S}_{[n]}}
\newcommand{\Sc}{\mathcal{S}_C}
\newcommand{\Snoti}{\mathcal{S}_{-i}}
\newcommand{\Exp}{\mathbb{E}}
\newcommand{\R}{\mathbb{R}}
\newcommand{\poly}{\textup{poly}}
\newcommand{\maxF}{\textup{\textsc{MaxF}}}
\newcommand{\osf}{\textup{\textsc{OSF}}}
\newcommand{\osm}{\textup{\textsc{OSM}}}
\newcommand{\osa}{\textup{\textsc{OSA}}}
\newcommand{\oss}{\textup{\textsc{OSS}}}
\newcommand{\osi}{\textup{\textsc{OSI}}}
\newcommand{\satas}{\textup{\textsc{Sat-AS}}}
\newcommand{\bv}{{\bf v}}
\newcommand{\BR}{\textsc{BR}}
\DeclareMathOperator*{\argmin}{arg\,min}
\DeclareMathOperator*{\argmax}{arg\,max}
\title{\bf Optimizing over Serial Dictatorships}
\author{Ioannis Caragiannis \and Nidhi Rathi}
\date{Department of Computer Science, Aarhus University\\
{\AA}bogade 34, 8200 Aarhus N, Denmark\\
Email: \url{{iannis,nidhi}@cs.au.dk}}
\begin{document}


\maketitle


\begin{abstract}
Motivated by the success of the {\em serial dictatorship} mechanism in social choice settings, we explore its usefulness in tackling various combinatorial optimization problems. We do so by considering an abstract model, in which a set of agents are asked to {\em act} in a particular ordering, called the {\em action sequence}. Each agent acts in a way that gives her the maximum possible value, given the actions of the agents who preceded her in the action sequence. Our goal is to compute action sequences that yield approximately optimal total value to the agents (a.k.a., {\em social welfare}). We assume {\em query access} to the value $v_i(S)$ that the agent $i$ gets when she acts after the agents in the ordered set $S$. 

We establish tight bounds on the social welfare that can be achieved using polynomially many queries. Even though these bounds show a marginally sublinear approximation of optimal social welfare in general, excellent approximations can be obtained when the valuations stem from an underlying combinatorial domain. Indicatively, when the valuations are defined using bipartite matchings, arborescences in directed graphs, and satisfiability of Boolean expressions, simple query-efficient algorithms yield $2$-approximations. We discuss issues related to truthfulness and show how some of our algorithms can be implemented truthfully using VCG-like payments. Finally, we introduce and study the {\em price of serial dictatorship}, a notion that provides an optimistic measure of the quality of combinatorial optimization solutions generated by action sequences.
\end{abstract}

\section{Introduction}
{\em Serial dictatorship} (SD) is arguably the most straightforward algorithm with numerous applications in many different settings~\cite{AS98,BM01,CFF+16,krysta2014size,RS90,FFZ:14,SVE:99}. In the economics literature (e.g., see~\cite{AS98}), it was first considered as a solution to the {\em house allocation} problem, where a set of houses have to be matched to agents who have strict preferences for them. The algorithm considers the agents in a fixed order and assigns the most-preferred available house to each agent considered.

Crucially, SD can be used in a much more general context. Consider a scenario with a set of agents, each having a menu of available actions and preferences for them. The objective is to compute a set of collective actions, with one action per agent. SD can perform the task by asking the agents to act in a predefined order. Whenever it is an agent's turn to act, she does so by selecting the action that is best possible, given the actions of agents who acted before. Previous actions may render some action of an agent unavailable and they can furthermore change her initial preferences.

This is a paradigm with several advantages. From the computational point of view, SD is as simple as possible. Furthermore, it is easily explainable to the agents, who are expected to have a minimal interaction with it. As SD never uses any input from the agents besides their actions, they have no other option but to act truthfully, selecting their best available action. On the negative side, the predefined order in which the agents are considered may lead to inefficiencies~\cite{CFF+16}. An agent who acts early may block actions which might be highly beneficial to agents acting later. {\em Random serial dictatorship}~\cite{AS98,BM01,FFZ:14,krysta2014size} is an attempt to address this inefficiency without deviating from the main principles of SD. The main idea is to consider the agents in a uniformly random order in the hope that problematic orderings will be used only rarely. However, inefficiencies may still happen~\cite{FFZ:14}.

We attempt a radical improvement of SD and instead ask: what is the ordering of the agents which SD must use in order to obtain the {\em best} possible results? In other words, we are interested in {\em optimizing} over serial dictatorships for a given problem. Of course, this task cannot be performed without using the agent preferences and, as such, creates apparent incentives to the agents to misreport their preferences in order to manipulate the outcome of the optimization to their benefit.

To mitigate this, we seek algorithmic solutions that use agent input sparingly. Starting with no information about the problem at hand, our algorithms learn (part of) the preferences by {\em posing queries} to the agents. A typical query to an agent asks for the value of her best possible action in an agent ordering. Our objective is to study the {\em query complexity}~\cite{DV12} of optimizing over serial dictatorships. Hence, we seek for algorithms that use as few queries as possible.

\subsection{Our contribution} At the highest level of abstraction, we define the problem \pro\ that captures the essence of optimizing over serial dictatorships. Instances of \pro\ consist of a set of $n$ agents. Each agent $i$ has a valuation function which returns the value $v_i(S)$ she gets by her best available action after the agents in the sequence $S$ have acted before $i$. Our aim is to design algorithms that compute an agent ordering that gives high value to the agents. Our benchmark is the {\em social welfare}, i.e., the total value of all agents. An important restriction is low query complexity; we seek for algorithms that use only polynomially many value queries. Our main result (under a mild monotonicity assumption) is an asymptotically tight bound of $\Theta\left(\frac{n\ln\ln{n}}{\ln{n}}\right)$ on the best approximation ratio that can be achieved by such algorithms that use randomization.

Next, we consider refined \pro\ instances that may be relevant for practical applications. We do so by assuming an underlying combinatorial structure which defines the valuations. Indicatively, we present three specific valuation structures, motivated by matching markets, network design, and constraint satisfaction. These define three special cases of \pro, which we call \osm, \osa, and \oss. In \osm, the agents have a set of items as their available actions. Whenever it is an agent's turn to act, she picks the available item that is of highest value to her. In \osa, the agents correspond to the nodes of a network and they can establish links to other nodes as actions. Whenever it is an agent's turn to act, she draws a link of maximum value under the constraint that the network stays acyclic.  In \oss, each agent controls a Boolean variable and has two actions: setting her variable to either True or False. These variables appear in a collection of weighted clauses. Whenever it is an agent's turn, she acts so that the weight of new satisfied clauses is maximized.

It turns out that \osm, \osa, and \oss\ admit much better algorithms. We present simple $2$-approximation algorithms for the three of them and leave open the question of better approximation results or lower bounds on the query complexity. We furthermore study computational questions related to the existence of serial dictatorships that produce a given solution to the underlying combinatorial problem. We prove that whether a given perfect matching for \osm\ and a given arborescence for \osa\ can be produced by serial dictatorships can be decided in polynomial time. In contrast, for \oss, we show that deciding whether a given Boolean assignment can be produced by a serial dictatorship is an NP-hard problem.

We also consider serial dictatorship as a {\em solution concept} and quantify the restrictions it poses on the solutions of optimization problems by introducing the notion of the {\em price of serial dictatorship} (PoSD). The PoSD of a combinatorial optimization problem is the worst-case ratio over all instances of the benefit of the optimal solution over the best social welfare obtained by a serial dictatorship. We prove optimal bounds of $1$ for both maximum-weight bipartite matching and maximum-weight arborescence in directed graphs. In contrast, for maximum satisfiability, the PoSD is between $3/2$ and $2$.

Finally, we scratch the surface of truthful implementations of our algorithms. We prove that our best algorithm for general instances of \pro\ has a truthful implementation through VCG-like payments (see~\cite{C71}) which require only polylogarithmically many queries to compute. Our algorithms for \osm\ and \osa\ are shown to be non-truthful. For \osa, we present a very simple randomized truthful mechanism that uses no payments.

Our work reveals many open problems; we give a list with the most important ones later. We remark that the approach can be naturally extended to many other combinatorial problems than the ones considered here.

\subsection{Further related work} Serial dictatorship and its variations have been well-studied in the matching literature~\cite{M13}. It seems however that the approach we consider here has not received attention therein. As an attempt to optimize serial dictatorships, one can view the notion of greedy weighted matchings in~\cite{DMS17}. A greedy weighted matching is one that can be produced by an algorithm which starts from an empty matching and iteratively puts an edge of maximum weight that is consistent with it. This is a restricted way to optimize over serial dictatorships. Furthermore, the authors in~\cite{DMS17} study the computational complexity of the problem assuming that the graph information (i.e., the edge weights) is given to the algorithm upfront. The notion of picking sequence in fair division \cite{gourves2021fairness} is close to the action sequences that we consider here; the crucial difference is that each agent can appear several times in a picking sequence.

Optimizing over serial dictatorships can be thought of as a particular way of exploiting greediness in computation. There have been several attempts to formally define greedy algorithms, including their relations to matroids, which are covered in classical textbooks on algorithm design~\cite{CLRS01}. In relation to combinatorial optimization, the work on incremental priority algorithms~\cite{BNR03,BBL+10,DI04} has conceptual similarities. Furthermore, for maximum satisfiability, there is a ongoing research line (e.g., see~\cite{PSW+17}) that aims to design simple greedy-like algorithms that achieve good approximation ratios. However, all these studies neglect query complexity questions. 

Finally, we remark that value queries, similar in spirit to those we consider here, have been used to solve the maximum welfare problem in combinatorial auctions~\cite{LLN06}. The PoSD notion is inspired by the price of stability in strategic games~\cite{ADK+08} and the price of fairness in allocation problems~\cite{BFT11,CKKK12}.

\subsection{Roadmap} The rest of the paper is structured as follows. We give preliminary definitions and our notation in Section~\ref{section:notations}. We then study the query complexity of \pro\ in Section~\ref{sec:general-instances}. The refinement of \pro\ with specific valuation structures is presented in Section~\ref{sec:val-struct}. The three case studies on bipartite matchings, arborescences, and satisfiability follow in Sections~\ref{sec:matching},~\ref{sec:arborescence}, and~\ref{sec:sat}. Issues related to truthful implementation are considered in Section~\ref{sec:truth}.

\section{Preliminaries}
\label{section:notations}
We consider scenarios where a set of agents are asked to \emph{act} in a particular order. Whenever it is the turn of an agent to act, she selects an action that gives her the maximum possible value, given the actions of agents who acted before her. In this section, we present basic definitions and notation for an abstract model that captures such scenarios. Later, in Section~\ref{sec:val-struct}, we refine this model further to bring it closer to well-known combinatorial optimization problems.

We denote by $n$ the number of agents, which we assume to be identified by the integers in set $[n]=\{1,2, ..., n\}$. An {\em action sequence} (respectively, {\em action subsequence}) $S$ is an ordering $S=(S(1),S(2), ...)$ of the agents in $[n]$ (respectively, some of the agents in $[n]$). For a subset of agents $C\subseteq [n]$, we denote by $\Sc$ the set of all action (sub)sequences of the agents in $C$. For an agent $i \in [n]$, we write $\Snoti$ to denote all possible action subsequences consisting of agents from the set $[n] \setminus \{i\}$. We reserve the letters $S$ and $\pi$ for use as action (sub)sequences.

For two action subsequences $S$ and $S'$, we write $S' \leq S$ if $S'$ is an ordered subset of $S$, i.e., if all agents in $S'$ are also in $S$ and have the same relative ordering. We use $S||i$ to denote the action (sub)sequence which is obtained by appending agent $i$ to the action subsequence $S$. We use $S \setminus j$ to denote the action subsequence obtained by removing agent $j$ from $S$, and preserving the order of the remaining agents. For an action (sub)sequence $S$, we write $S^i$ to denote the action subsequence consisting of the ordered set of agents that are ranked ahead of agent $i$ in $S$. We sometimes use the term {\em prefix} to refer to agents at the beginning of an action (sub)sequence. Clearly, for any agent $i$ who appears in action subsequence $S$, it is $S^i\leq S$. For an integer $c\in [n]$, we denote by $S|_c$ the action subsequence consisting of the first $c$ agents of $S$.

Every agent $i \in [n]$ has a {\em valuation function} $v_i: \Snoti \mapsto \R_{\geq 0}$ which, for every action subsequence $S\in \Snoti$, returns the value $v_i(S)$ that agent $i$ gets when acting immediately after the agents in the action subsequence $S$. We say that the valuations of an agent $i$ are {\em monotone} if for any two action subsequences $S,S' \in \Snoti$, $S'\leq S$ implies $v_i(S')\geq v_i(S)$. The {\em social welfare} $\SW(S)$ of an action sequence $S$ is the total value of the agents in the sequence, i.e., $\SW(S) = \sum_{i\in [n]}{v_i(S^i)}$. 

We study the problem of computing an action sequence having maximum social welfare, which we call \pro. In particular, an instance of \pro\ consists of the valuation functions of $n$ agents and the objective is to compute an action sequence of maximum social welfare. An important limitation of our model is that the valuation functions are not given to the algorithm as part of the input. Instead, the algorithms for solving \pro\ can access the entry $v_i(S)$ of the valuation function through a query. We are interested in algorithms that solve \pro\ using a polynomial number of queries. In other words, we are interested in understanding the {\em query complexity} of \pro. Notice that exponentially many queries (for each agent $i$ and each action subsequence in $\Snoti$) can always be used to learn the valuation functions. Then, \pro\ can be solved exactly. We remark that computational limitations are not our concern in this work and we assume to have unlimited computational resources to process the valuations once these are available.

As we will see, solving \pro\ exactly using polynomially many queries is a challenge. So, we are interested in algorithms that compute approximately optimal solutions. On input an \pro\ instance, a $\rho$-{\em approximate} action sequence (for $\rho\geq 1$) is one which has a social welfare that is at most $\rho$ times smaller than the maximum possible (or optimal) social welfare over all action sequences. We refer to $\rho$ as the {\em approximation ratio} of an algorithm which always returns $\rho$-approximate action sequences. We are interested in \pro\ algorithms that make only polynomially many queries and achieve an approximation ratio that is as low as possible.

\section{The Query Complexity of Approximate Serial Dictatorships}\label{sec:general-instances}
We devote this section to studying the query complexity of approximation algorithms for \pro, presenting both positive and negative results. It is not hard to see that non-monotone valuations are notoriously hard to cope with and any finite approximation may require to examine almost all the $n!$ action sequences.\footnote{Indeed, consider instances in which a special agent $i$ has valuation $v_i(S)=1$ if $S$ contains all $n-1$ other agents in a specific hidden order and all other agent valuations are zero. To compute the only action sequence with non-zero social welfare, an algorithm needs to ``guess'' the hidden order.} However, as we will see later in Section~\ref{sec:sat}, excellent approximations for special cases of non-monotone valuations are possible.

So, in the rest of this section, we focus on instances with monotone valuations. On the positive side, we present two algorithms for \pro, one deterministic and one randomized, called {\sc Det} and {\sc Rand}, respectively. Both algorithms use an integer parameter $c\geq 1$ and apply to $n$-agent \pro\ instances. We denote by $N_c$ the collection consisting of all sets of $c$ distinct agents from $[n]$. Algorithm {\sc Det} considers all $c$-sets in $N_c$. For each set $C\in N_c$, it considers every action subsequence $S\in \Sc$ and selects the one, call it $S'$, that maximizes the quantity $\sum_{i\in C}{v_i(S^i)}$. Finally, it returns an action sequence that contains $S'$ as a prefix (by filling the last $n-c$ positions in the action sequence with agents in $[n]\setminus C$ in arbitrary order).

\begin{theorem}\label{thm:det}
On input $n$-agent monotone \pro\ instances, algorithm {\sc Det} makes ${n \choose c}\cdot c\cdot c!$ queries and has an approximation ratio of at most $\frac{n}{c}$.
\end{theorem}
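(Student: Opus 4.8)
The query count is immediate from the description of {\sc Det}: there are $\binom{n}{c}$ sets $C \in N_c$, each admitting $c!$ orderings $S \in \Sc$, and evaluating the objective $\sum_{i \in C} v_i(S^i)$ for one fixed ordering requires exactly $c$ queries (one per agent in $C$). Multiplying gives $\binom{n}{c} \cdot c \cdot c!$ queries.

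For the approximation guarantee, fix an optimal action sequence $\OPT$ and write its social welfare as $\SW(\OPT) = \sum_{i \in [n]} v_i(\OPT^i)$. The plan is to exhibit a single set $C \in N_c$ together with an ordering of it whose objective value is at least $\frac{c}{n}\SW(\OPT)$. Since {\sc Det} maximizes this objective over all sets and orderings, and since the $n-c$ agents it appends as a suffix contribute nonnegatively (valuations take values in $\R_{\geq 0}$), the returned sequence will then have social welfare at least $\frac{c}{n}\SW(\OPT)$, yielding an approximation ratio of at most $n/c$.

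The key ingredient is a monotonicity comparison against $\OPT$. For any subset $C$, let $S_C \in \Sc$ be the ordering of $C$ inherited from $\OPT$. For each $i \in C$, the prefix $S_C^i$ consists of exactly those agents of $C$ that precede $i$ in $\OPT$, so $S_C^i \leq \OPT^i$; monotonicity then gives $v_i(S_C^i) \geq v_i(\OPT^i)$, and hence $\sum_{i \in C} v_i(S_C^i) \geq \sum_{i \in C} v_i(\OPT^i)$. The second ingredient is an averaging step: for a uniformly random $C \in N_c$, each agent satisfies $\Pr[i \in C] = c/n$, so by linearity $\Exp_C\big[\sum_{i \in C} v_i(\OPT^i)\big] = \frac{c}{n}\SW(\OPT)$. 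Consequently some $C$ achieves $\sum_{i \in C} v_i(\OPT^i) \geq \frac{c}{n}\SW(\OPT)$, and combined with the monotonicity comparison its inherited ordering $S_C$ attains objective value at least $\frac{c}{n}\SW(\OPT)$, as required.

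I expect the only delicate point to be getting the direction of monotonicity right: one must verify that restricting attention to a subset shrinks each prefix (so that $S_C^i \leq \OPT^i$, not the reverse), and that the monotonicity assumption therefore \emph{raises} the per-agent value. Everything else — the query bookkeeping, the linearity-of-expectation averaging, and the observation that appending the remaining agents cannot decrease social welfare — is routine.
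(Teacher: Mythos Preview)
Your proof is correct and follows essentially the same approach as the paper: both use the monotonicity comparison $\sum_{i\in C} v_i(S_C^i) \ge \sum_{i\in C} v_i(\OPT^i)$ for the $\OPT$-inherited ordering, together with an averaging step to find a good $c$-set. The only cosmetic difference is that the paper selects the $c$ agents with largest values $v_i(\OPT^i)$ directly (so their total is at least a $c/n$ fraction by pigeonhole), whereas you obtain the same bound via expectation over a uniformly random $C$; both arguments are equally short and yield the identical inequality chain.
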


\begin{proof}
Consider an $n$-agent \pro\ instance and notice that the algorithm enumerates all the ${n \choose c}$ elements of $N_c$ and for each such element $C\in N_c$ and each of the $c!$ action subsequences  $S\in \Sc$, it makes $c$ queries to compute the value of the sum $\sum_{i\in C}{v_i(S^i)}$. The bound on the number of queries follows.

Given an $n$-agent \pro\ instance, let $\widetilde{\pi}$ be the optimal action sequence, $\pi$ be the action sequence returned by algorithm {\sc Det}, and $S=\pi|_c$. Denote by $\widetilde{S}\leq \widetilde{\pi}$ the action subsequence consisting of those $c$ agents that have the $c$ highest values $v_i(\widetilde{\pi}^i)$ (breaking ties arbitrarily). Then, by the definition of {\sc Det}, 
\begin{align}\label{eq:thm:det-1}
\sum_{i\in S}{v_i(S^i)} &\geq \sum_{i\in \widetilde{S}}{v_i(\widetilde{S}^i)}.
\end{align}
Furthermore, by monotonicity, the fact that $\widetilde{S}^i\leq \widetilde{\pi}^i$ implies that $v_i(\widetilde{S}^i)\geq v_i(\widetilde{\pi}^i)$ for every agent $i$ and, thus,
\begin{align}\label{eq:thm:det-2}
\sum_{i\in \widetilde{S}}{v_i(\widetilde{S}^i)} &\geq \sum_{i\in \widetilde{S}}{v_i(\widetilde{\pi}^i)}.
\end{align}
Finally, by the definition of $\widetilde{S}$, we have
\begin{align}\label{eq:thm:det-3}
    \sum_{i\in 
    \widetilde{S}}{v_i(\widetilde{\pi}^i}) &\geq \frac{c}{n}\cdot \sum_{i\in [n]}{v_i(\widetilde{\pi}^i)}.
\end{align}
By inequalities (\ref{eq:thm:det-1}), (\ref{eq:thm:det-2}), (\ref{eq:thm:det-3}), and the definition of the social welfare, we get
\begin{align*}
    \SW(\pi) &= \sum_{i\in [n]}{v_i(\pi^i)} \geq \sum_{i\in S}{v_i(S^i)}\geq \frac{c}{n} \cdot \sum_{i\in [n]}{v_i(\widetilde{\pi}^i)} = \frac{c}{n} \cdot \SW(\widetilde{\pi}).
\end{align*}
The bound on the approximation ratio follows.
\end{proof}
Theorem~\ref{thm:det} implies the following using $c=1/\epsilon$ for constant $\epsilon>0$.

\begin{corollary}
For every constant $\epsilon>0$, on input $n$-agent monotone \pro\ instances, algorithm {\sc Det} has an approximation ratio at most $\epsilon\cdot n$ by making $O\left(n^{2/\epsilon}\right)$ queries.
\end{corollary}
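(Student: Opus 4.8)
The plan is to apply Theorem~\ref{thm:det} verbatim, instantiating its parameter $c$ as a constant that depends only on $\epsilon$. Since $\epsilon>0$ is fixed, I would take $c=\lceil 1/\epsilon\rceil$, a positive integer independent of $n$ (which equals $1/\epsilon$ exactly in the clean case where $1/\epsilon$ is integral). With this choice the approximation guarantee is immediate: Theorem~\ref{thm:det} bounds the approximation ratio of {\sc Det} by $n/c$, and because $c\geq 1/\epsilon$ we have $n/c\leq \epsilon\cdot n$, which is exactly the claimed ratio.

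It then remains only to bound the number of queries, which Theorem~\ref{thm:det} gives as exactly ${n\choose c}\cdot c\cdot c!$. Here I would use the standard estimate ${n\choose c}\leq n^{c}$ and observe that the leftover factor $c\cdot c!$ is a constant, since $c$ is determined by the fixed $\epsilon$; hence the query count is $O(n^{c})$. In the relevant regime $\epsilon\leq 1$ we have $1/\epsilon\geq 1$, so $c=\lceil 1/\epsilon\rceil\leq 1/\epsilon+1\leq 2/\epsilon$, and therefore $O(n^{c})\subseteq O(n^{2/\epsilon})$. Equivalently, one can absorb the constant factor crudely by noting $c\cdot c!\leq n^{c}$ for all large $n$, which yields ${n\choose c}\cdot c\cdot c!\leq n^{2c}=O(n^{2/\epsilon})$; either route produces the stated exponent.

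Since this is a direct specialization of the theorem, there is no substantive obstacle to overcome. The only points that require a sliver of care are ensuring that $c$ is an integer (handled by the ceiling) and tracking the exponent so that the loose estimate lands on $2/\epsilon$ rather than the slightly sharper $1/\epsilon$ one could actually prove; the extra factor of two merely provides room to turn the ceiling into a clean exponent and to swallow the constant $c\cdot c!$. I would restrict attention to $\epsilon\leq 1$, which is the only interesting regime, since for larger $\epsilon$ the permitted approximation ratio already exceeds $n$.
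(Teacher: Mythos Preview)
Your proposal is correct and takes essentially the same approach as the paper: the paper simply states that the corollary follows from Theorem~\ref{thm:det} by setting $c=1/\epsilon$ for constant $\epsilon>0$, without spelling out the query-count estimate. Your version supplies the details the paper omits (the ceiling to guarantee integrality and the calculation that the exponent lands at $2/\epsilon$), but the underlying idea is identical.
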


Algorithm {\sc Rand} is similar to {\sc Det}; it uses randomization to avoid the enumeration over all elements of $N_c$. First, it selects a $c$-set $C$ from $N_c$ uniformly at random. Then, it considers every action subsequence $S\in \Sc$ and selects the one, call it $S'$, that maximizes the quantity $\sum_{i\in C}{v_i(S^i)}$. Finally, it returns an action sequence that contains $S'$ as a prefix.

\begin{theorem}\label{thm:rand}
On input $n$-agent monotone \pro\ instances, algorithm {\sc Rand} makes $c\cdot c!$ queries and has an approximation ratio of at most $\frac{n}{c}$.
\end{theorem}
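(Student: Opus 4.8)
The plan is to mirror the analysis of algorithm {\sc Det} from Theorem~\ref{thm:det}, replacing the deterministic selection of the $c$ highest-value agents by an averaging argument over the random choice of $C$. The query bound is immediate: once $C$ is fixed, there are exactly $c!$ action subsequences $S\in\Sc$, and evaluating $\sum_{i\in C}v_i(S^i)$ for each of them costs $c$ queries, giving $c\cdot c!$ queries in total. Since {\sc Rand} draws only a single $c$-set, there is no further dependence on ${n\choose c}$. I interpret the approximation guarantee for this randomized algorithm as a bound on the \emph{expected} social welfare.

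Let $\widetilde{\pi}$ be an optimal action sequence, let $\pi$ be the sequence returned by {\sc Rand}, and let $S'=\pi|_c$ be its prefix on the random set $C$. First I would condition on $C$ and compare the algorithm's choice against one specific competitor: the action subsequence $\widetilde{S}\leq\widetilde{\pi}$ obtained by restricting $\widetilde{\pi}$ to the agents of $C$ while preserving their relative order. Since {\sc Rand} selects $S'$ to maximize $\sum_{i\in C}v_i(S^i)$ over all $S\in\Sc$, since all valuations are nonnegative, and since for every $i\in C$ the prefix of $i$ in $\pi$ coincides with its prefix in $S'$, we obtain
$$\SW(\pi)=\sum_{i\in[n]}v_i(\pi^i)\geq \sum_{i\in C}v_i\big((S')^i\big)\geq \sum_{i\in C}v_i(\widetilde{S}^i).$$
Monotonicity then applies exactly as in inequality~(\ref{eq:thm:det-2}): because $\widetilde{S}^i\leq\widetilde{\pi}^i$ for each $i\in C$, we have $v_i(\widetilde{S}^i)\geq v_i(\widetilde{\pi}^i)$, so that $\SW(\pi)\geq\sum_{i\in C}v_i(\widetilde{\pi}^i)$, conditioned on $C$.

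The final step, which replaces inequality~(\ref{eq:thm:det-3}), is where randomization does the work: I would take the expectation over the uniformly random $c$-set $C$. Since each fixed agent $i$ lies in $C$ with probability ${n-1\choose c-1}/{n\choose c}=c/n$, linearity of expectation yields
$$\Exp[\SW(\pi)]\geq \Exp\Big[\sum_{i\in C}v_i(\widetilde{\pi}^i)\Big]=\sum_{i\in[n]}\Pr[i\in C]\,v_i(\widetilde{\pi}^i)=\frac{c}{n}\cdot\SW(\widetilde{\pi}),$$
which gives the claimed expected approximation ratio of $n/c$. The only subtlety I anticipate is conceptual rather than technical: one must make explicit that a uniformly random subset of size $c$ captures a $c/n$ fraction of the optimal welfare in expectation, playing the role that the ``$c$ highest-value agents'' played deterministically in Theorem~\ref{thm:det}. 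Everything else is a direct transcription of the earlier chain of inequalities.
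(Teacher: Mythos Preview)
Your proof is correct and follows essentially the same approach as the paper: compare the algorithm's chosen prefix against the restriction of the optimal sequence $\widetilde{\pi}$ to the random set $C$, use monotonicity to pass to $\sum_{i\in C}v_i(\widetilde{\pi}^i)$, and then take expectations using $\Pr[i\in C]=c/n$. The paper writes your $\widetilde{S}$ as $S_C$, but the argument is otherwise identical.
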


\begin{proof}
The bound on the number of queries is due to the fact that {\sc Rand} considers all the $c!$ action subsequences with the agents in $C$ and computes the sum $\sum_{i\in C}{v_i(S^i)}$ by making $c$ queries for each of them.

To bound the approximation ratio, consider an $n$-agent \pro\ instance and let $\widetilde{\pi}$ and $\pi$ be the optimal action sequence and the (random) action sequence returned by {\sc Rand}, respectively. Also, let $S_C$ be the action subsequence consisting of agents in $C$ so that $S_C\leq \widetilde{\pi}$. Since {\sc Rand} prefers the prefix $\pi|_c$ to $S_C$, we have
\begin{align}\label{eq:thm:rand-1}
    \sum_{i\in C}{v_i(\pi^i)} &\geq \sum_{i\in C}{v_i(S_C^i)} \geq \sum_{i\in C}{v_i(\widetilde{\pi}^i)}.
\end{align}
The last inequality follows by monotonicity of valuations, which implies that $v_i(S_C^i)\geq v_i(\widetilde{\pi}^i)$ since $S_C^i\leq \widetilde{\pi}^i$ for every agent $i\in C$. Now, since $C$ is selected uniformly at random from $N_c$, an agent from $[n]$ belongs to $C$ with probability $\frac{c}{n}$. Using linearity of expectation, inequality (\ref{eq:thm:rand-1}), and this last observation, we have
\begin{align*}
    \Exp\left[\SW(\pi)\right]&\geq \Exp\left[\sum_{i\in C}{v_i(\pi^i)}\right] \geq \Exp\left[\sum_{i\in C}{v_i(\widetilde{\pi}^i)}\right] = \frac{c}{n}\cdot \Exp\left[\sum_{i\in [n]}{v_i(\widetilde{\pi}^i)}\right]=\frac{c}{n}\cdot \Exp\left[\SW(\widetilde{\pi})\right].
\end{align*}
This completes the proof.
\end{proof}

For $c=\Theta\left(\frac{\ln{n}}{\ln\ln{n}}\right)$, Theorem~\ref{thm:rand} yields the following.

\begin{corollary}
On input $n$-agent monotone \pro\ instances, algorithm {\sc Rand} makes at most $\poly(n)$ queries and has an approximation ratio of at most $O\left(\frac{n\ln\ln{n}}{\ln{n}}\right)$.
\end{corollary}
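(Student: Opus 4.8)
The plan is to instantiate Theorem~\ref{thm:rand} with a carefully chosen value of the parameter $c$, so that the query bound $c\cdot c!$ remains polynomial in $n$ while the approximation ratio $n/c$ drops to $O\left(\frac{n\ln\ln{n}}{\ln{n}}\right)$. These two demands pull in opposite directions: a larger $c$ sharpens the approximation ratio but inflates the factorial in the query count. The sweet spot is $c=\lfloor \ln{n}/\ln\ln{n}\rfloor$, which makes $c!$ just barely polynomial while pushing $c$ as large as this constraint permits.

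First I would verify the query bound. Using the elementary estimate $c!\leq c^c=e^{c\ln{c}}$, I would substitute $c\approx \ln{n}/\ln\ln{n}$ and compute $\ln{c}=\ln\ln{n}-\ln\ln\ln{n}\leq \ln\ln{n}$. Hence $c\ln{c}\leq \frac{\ln{n}}{\ln\ln{n}}\cdot \ln\ln{n}=\ln{n}$, which yields $c!\leq e^{\ln{n}}=n$. Consequently $c\cdot c!\leq n\cdot \frac{\ln{n}}{\ln\ln{n}}\leq n\ln{n}=\poly(n)$, so {\sc Rand} makes only polynomially many queries with this choice of $c$.

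Next I would read the approximation ratio directly off Theorem~\ref{thm:rand}. Since $c=\lfloor \ln{n}/\ln\ln{n}\rfloor\geq \frac{\ln{n}}{\ln\ln{n}}-1=\frac{\ln{n}-\ln\ln{n}}{\ln\ln{n}}$, the ratio is at most $n/c\leq \frac{n\ln\ln{n}}{\ln{n}-\ln\ln{n}}=\frac{n\ln\ln{n}}{\ln{n}}\cdot(1+o(1))=O\left(\frac{n\ln\ln{n}}{\ln{n}}\right)$, where the $(1+o(1))$ factor absorbs the floor in the definition of $c$.

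The computation is essentially routine, and there is no substantial obstacle; the only point requiring care is the two-sided balancing of $c$ together with the handling of the lower-order term $\ln\ln\ln{n}$ in the Stirling-type estimate. For this reason I would keep the bound $c\ln{c}\leq \ln{n}$ as an inequality throughout, rather than chasing an exact asymptotic equality, since the inequality already delivers both the polynomial query bound and the claimed approximation ratio.
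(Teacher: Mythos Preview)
Your proposal is correct and follows exactly the approach the paper takes: the paper simply states that choosing $c=\Theta\left(\frac{\ln{n}}{\ln\ln{n}}\right)$ in Theorem~\ref{thm:rand} yields the corollary, and you have filled in the routine verification that this choice makes $c\cdot c!=\poly(n)$ while $n/c=O\left(\frac{n\ln\ln{n}}{\ln{n}}\right)$.
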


In the following, we show that {\sc Rand} is (asymptotically) optimal.
\begin{theorem} \label{theorem:lower-bound}
For every integer $c\geq 1$, any (possibly randomized) algorithm for monotone \pro\ that makes at most $\frac{c!}{n}-1$ queries when applied on instances with $n$ agents has an approximation ratio at least $\frac{n}{c+1}$.
\end{theorem}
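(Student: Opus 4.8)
The plan is to prove the lower bound via Yao's minimax principle: I will exhibit a distribution $\mathcal{D}$ over monotone \pro\ instances on $n$ agents and argue that every \emph{deterministic} algorithm making at most $q := \frac{c!}{n} - 1$ queries has expected social welfare (over $\mathcal{D}$) at most a $\frac{c+1}{n}$ fraction of the expected optimum. Since a randomized algorithm is a distribution over deterministic ones, this transfers to randomized algorithms and yields the claimed ratio $\frac{n}{c+1}$.

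The instance distribution should encode a hidden ordering of a small set. Fix a set $B \subseteq [n]$ with $|B| = c$ and draw a secret ordering $\tau \in \mathcal{S}_B$ uniformly at random, so there are $c!$ equiprobable secrets. I want the valuations engineered so that (i) the optimal action sequence must begin with the agents of $B$ in exactly the order $\tau$, and doing so unlocks value for the remaining agents, so that $\OPT = \Theta(n)$; (ii) any action sequence whose length-$c$ prefix is not $\tau$ collects social welfare at most $c+1$; and (iii) crucially, every query $v_i(S)$ returns a value that is \emph{independent of $\tau$} unless the queried subsequence $S$ exactly realizes the secret, i.e.\ unless $S$ equals the sequence $\tau$. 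Monotonicity is to be maintained by building each valuation from a base part depending only on $|S|$ together with a single all-or-nothing jackpot, placed so that its entire down-closure already sits at the maximal base level; this is what forces the gap in (i)--(ii) to be hidden at one specific length rather than leaking to short prefixes.

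Given such a construction, the analysis is a union bound over queries. By (iii), the only query revealing information about $\tau$ is one that hits the secret configuration; for a \emph{fixed} query this happens with probability at most $1/c!$ over the random $\tau$, since $\tau$ is uniform among $c!$ orderings. Hence a deterministic algorithm making $q$ queries sees a $\tau$-dependent answer with probability at most $q/c! < 1/n$. Conditioned on never hitting the secret, the algorithm's transcript, and therefore its output $\pi$, is independent of $\tau$, so by (ii) it collects at most $c+1$; combining with the $<1/n$ chance of a lucky hit (which can gain at most $\OPT$) and linearity of expectation gives $\Exp[\SW(\pi)] \le (c+1) + \tfrac{1}{n}\,\OPT$, whence the ratio $\frac{n}{c+1}$.

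The step I expect to be the real obstacle is the construction itself, namely reconciling the three demands above. Monotonicity insists that the high-value region of each valuation be downward closed in the subsequence order, which makes it delicate to reward a specific \emph{prefix ordering} without either (a) leaking the order through short sub-queries, so that a sorting-style strategy recovers $\tau$ with only $O(n + c\log c) \ll q$ queries and destroys the bound, or (b) paying a ``monotonicity tax'' that inflates the social welfare of non-conforming sequences and collapses the gap toward $O(1)$. Threading this needle---spreading the unlocked value across $\Theta(n)$ agents while keeping every query uninformative about $\tau$ and keeping each valuation monotone---is the crux; the Yao framing and the counting bound are then routine.
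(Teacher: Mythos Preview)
Your overall architecture---Yao's principle, a hidden order that unlocks value $\Theta(n)$, and a $1/c!$ hit probability per query---is exactly the paper's. The gap is in the choice of hidden object. By fixing the set $B$ in advance and randomizing only $\tau\in\mathcal{S}_B$, you create precisely the leak you warn about in (a): any monotone valuation that rewards the prefix $\tau$ must, by downward closure, also reward every pair of $B$-agents that appears in the correct relative order, and an adversary who knows $B$ can therefore recover $\tau$ by $O(c\log c)$ padded two-element comparisons. Concretely, the natural candidate ``$v_i(S)=1$ iff $|S|<c$ or the $B$-elements of $S$ respect $\tau$'' is monotone but leaks; the candidate ``$v_i(S)=1$ iff $|S|<c$ or $S|_c=\tau$'' does not leak but is not monotone; and ``$v_i(S)=1$ iff $|S|<c$ or $S=\tau$'' is monotone and does not leak but gives $\OPT=c+1$, not $\Theta(n)$. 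With $B$ fixed, I do not see how to satisfy (i)--(iii) simultaneously.

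The paper's fix is to abandon a fixed $B$ and instead hide a uniformly random permutation $\widehat{\pi}\in\Sn$ of \emph{all} agents, setting $v_i(S)=1$ iff $|S|<c$ or $S\leq\widehat{\pi}$. Monotonicity is immediate from transitivity of the subsequence relation, the sequence $\widehat{\pi}$ itself attains social welfare $n$, and any action sequence $\overline{\pi}$ with $\overline{\pi}|_c\not\leq\widehat{\pi}$ attains exactly $c$. Crucially, for any fixed $S$ with $|S|\geq c$ the ``hit'' event $S|_c\leq\widehat{\pi}$ has probability exactly $1/c!$ over the random $\widehat{\pi}$, regardless of which $c$ agents sit in $S|_c$---so there is no privileged set to sort, and the pairwise-comparison attack disappears. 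The analysis then union-bounds over the $k$ query subsequences \emph{and the output sequence}, giving expected welfare at most $c+(n-c)\cdot\frac{k+1}{c!}\leq c+1$ when $k\leq\frac{c!}{n}-1$. (Your final inequality $\Exp[\SW(\pi)]\leq (c+1)+\tfrac{1}{n}\,\OPT$ would only yield $n/(c+2)$; counting the output as one additional ``guess'' and bounding the non-hit welfare by $c$ rather than $c+1$ is what gives the clean $n/(c+1)$.)
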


\begin{proof}
For an integer $c\geq 1$, we will define a probability distribution $Q_c$ over $n$-agent instances of \pro\ with an optimal social welfare of $n$, and will prove that the expected social welfare of any deterministic algorithm that makes at most $\frac{c!}{n}-1$ queries on these instances is at most $c+1$. Then, by Yao's principle~\cite{Y77}, the quantity $\frac{n}{c+1}$ will be a lower bound on the approximation ratio of randomized algorithms for \pro.

We define the family $\mathcal{F}_c$ consisting of instances $I_{\pi}$ for every $\pi\in \Sn$. Instance $I_{\pi}$ is defined by the following valuation functions. For any agent $i\in [n]$ and any action subsequence $S$ from $\Snoti$:
	\[
	v_i(S) =
	\left\{
	\begin{array}{ll}
		1  & \mbox{if }   |S|<c  \ \ \text{or} \ \ S \leq \pi \\
		0 & \mbox{otherwise } 
	\end{array}
	\right.
	\]
Hence, the valuation functions are binary. An agent gets a value of $1$ only in the following two cases: (a) when she has fewer than $c$ agents ahead of her, or (b) when the agents ranked ahead of her have the same relative ordering as in the action sequence $\pi$. In all other cases, her value is $0$. Notice that the action sequence $\pi$ has optimal social welfare of $n$ for instance $I_{\pi}$.

Let us prove that the instances in $\mathcal{F}_c$ are indeed valid ones, i.e., the valuations defined are monotone. Fix any agent $i \in [n]$ and any action subsequence $S \in \Snoti$. To prove monotonicity, we need to show that $v_i(S') \geq v_i(S)$ for any action subsequence $S' \leq S$. This follows trivially when $|S|<c$ or when $v_i(S)=0$. Therefore, let us assume that $|S| \geq c$ and $v_i(S) =1$. This implies that $S$ is an action subsequence of $\pi$, i.e., $S\leq \pi$. By transitivity, we have $S' \leq \pi$ as well, and the definition of $v_i$ yields $v_i(S')=1$, thereby proving that $v_i$ is monotone.

For integer $c\geq 1$, the probability distribution $Q_c$ simply selects a {\em magic} action sequence $\widehat{\pi}$ and produces the instance $I_{\widehat{\pi}}$. Consider a deterministic algorithm $\mathcal{A}$ that makes at most $k=\frac{c!}{n}-1$ queries; we will show that the expected social welfare of its output on the (random) instances produced by distribution $Q_c$ is at most $c+1$. Assume that, on input the instance $I_{\widehat{\pi}}$, algorithm $\mathcal{A}$ makes $k$ queries of the form $(i,S)$ and eventually returns the action sequence $\overline{\pi}\in \Sn$. We say that an action (sub)sequence $S$ {\em hits} the magic action sequence $\widehat{\pi}$ when $|S|\geq c$ and $S|_c\leq \widehat{\pi}$. We define the {\em enhanced} version of $\mathcal{A}$, denoted by $\mathcal{A}_{\widehat{\pi}}$, that works as follows when applied on instance $I_{\widehat{\pi}}$. Algorithm $\mathcal{A}_{\widehat{\pi}}$ simulates algorithm $\mathcal{A}$, performing the following additional steps. Whenever $\mathcal{A}$ makes a query $(i,S)$ so that $S$ hits the magic action sequence $\widehat{\pi}$, $\mathcal{A}_{\widehat{\pi}}$ returns $\widehat{\pi}$ and terminates. Also, if the action sequence $\overline{\pi}$ returned by $\mathcal{A}$ hits $\widehat{\pi}$, algorithm $\mathcal{A}_{\widehat{\pi}}$ returns $\widehat{\pi}$ instead. Otherwise, it returns $\overline{\pi}$.

Observe that, the social welfare of algorithm $\mathcal{A}$ when applied on the instance $I_{\widehat{\pi}}$ is never higher than the social welfare of the enhanced algorithm $\mathcal{A}_{\widehat{\pi}}$. This is due to the fact that $\mathcal{A}_{\widehat{\pi}}$ returns a suboptimal action sequence (i.e., one that is different than $\widehat{\pi}$) only when this sequence is returned by $\mathcal{A}$. Hence, it suffices to bound the social welfare of algorithm $\mathcal{A}_{\widehat{\pi}}$ when applied on instance $I_{\widehat{\pi}}$.

For an action (sub)sequence $S$, observe that $\Sn$ contains $\frac{n!}{c!}$ action sequences $\pi'$ such that $S|_c\leq \pi'$. Since $\widehat{\pi}$ is selected uniformly at random among the $n!$ action sequences of $\Sn$, the probability that $S$ hits $\widehat{\pi}$ is $\frac{1}{c!}$. By applying the union bound, we obtain that the probability that either some of the $k$ action subsequences that algorithm $\mathcal{A}$ uses in its $k$ queries or the action sequence $\overline{\pi}$ it outputs hit the magic action sequence $\widehat{\pi}$ is at most $\frac{k+1}{c!}$. We conclude that the enhanced algorithm $\mathcal{A}_{\widehat{\pi}}$ returns $\widehat{\pi}$ with probability at most $\frac{k+1}{c!}$ and gets social welfare $n$. Notice that if this is not the case, the social welfare of the action sequence returned is exactly $c$. Then, the expected welfare of algorithm $\mathcal{A}$ is
\begin{align*}
    \Exp_{I\sim Q_c}[\SW(\mathcal{A(I)})] &\leq \Exp_{\widehat{\pi}\sim Q_c}[\SW(\mathcal{A}_{\widehat{\pi}}(I_{\widehat{\pi}}))]\leq c+(n-c)\cdot\frac{k+1}{c!}\leq c+1,
\end{align*}
as desired. The last inequality follows by the definition of $k$.
\end{proof}

By applying Theorem~\ref{theorem:lower-bound} with $c=\Theta\left(\frac{\ln{n}}{\ln\ln{n}}\right)$, we obtain that {\sc Rand} is essentially best possible.
\begin{corollary}
Any (possibly randomized) algorithm for monotone \pro\ that makes $\poly(n)$ queries has approximation ratio $\Omega\left(\frac{n\ln\ln{n}}{\ln{n}}\right)$.
\end{corollary}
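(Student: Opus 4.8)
The plan is to invoke Theorem~\ref{theorem:lower-bound} with a carefully chosen value of $c$ and then verify two asymptotic estimates. Fix an arbitrary polynomial query bound, say the algorithm makes at most $p(n)=n^k$ queries for some constant $k$. To apply the theorem I need to select an integer $c$ (depending on $k$) for which two conditions hold simultaneously: first, the query budget $\frac{c!}{n}-1$ from the theorem must be at least $n^k$, so that \emph{every} algorithm making $p(n)$ queries is subject to the stated lower bound; and second, the resulting ratio $\frac{n}{c+1}$ must be of the desired order $\Omega\!\left(\frac{n\ln\ln n}{\ln n}\right)$, which amounts to requiring $c=O\!\left(\frac{\ln n}{\ln\ln n}\right)$.

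The natural choice is $c=\left\lceil \frac{(k+2)\ln n}{\ln\ln n}\right\rceil$. With this choice $c=\Theta\!\left(\frac{\ln n}{\ln\ln n}\right)$, so $\frac{n}{c+1}=\Omega\!\left(\frac{n\ln\ln n}{\ln n}\right)$, which settles the second condition immediately. The first condition is the crux: I must show $c!\geq n^{k+1}+n$ for all sufficiently large $n$, since this rearranges to $\frac{c!}{n}-1\geq n^k$. I would establish it via Stirling's approximation in the form $\ln(c!)=c\ln c-c+O(\ln c)$. Substituting $c=\Theta\!\left(\frac{\ln n}{\ln\ln n}\right)$ gives $\ln c=(1+o(1))\ln\ln n$, so that $c\ln c=(1+o(1))(k+2)\ln n$, while the lower-order terms $c$ and $O(\ln c)$ are $o(\ln n)$. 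Hence $\ln(c!)=(1+o(1))(k+2)\ln n$, i.e.\ $c!=n^{(1+o(1))(k+2)}$, which exceeds $n^{k+1}+n$ once $n$ is large enough.

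The main obstacle is exactly the tension between these two requirements, and making the Stirling estimate sharp enough to resolve it: $c$ must be large enough that $c!$ dominates the target polynomial $n^{k+1}$ (so the query budget comfortably absorbs any fixed polynomial), yet small enough that $\frac{n}{c+1}$ stays at the optimal $\frac{n\ln\ln n}{\ln n}$ scale. The factor $(k+2)$ in the numerator of $c$ is what buys the room to handle an arbitrary polynomial degree $k$ while keeping $c$ within a constant factor of $\frac{\ln n}{\ln\ln n}$; since that constant factor only affects the hidden constant in the $\Omega(\cdot)$, the asymptotic bound is unaffected. Combining the two verified hypotheses with Theorem~\ref{theorem:lower-bound} then yields the claimed lower bound, which matches up to constant factors the upper bound of the corollary following Theorem~\ref{thm:rand} and thereby confirms that {\sc Rand} is asymptotically optimal among polynomial-query algorithms.
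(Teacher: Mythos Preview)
Your proposal is correct and follows essentially the same approach as the paper: the paper's proof is a one-liner that simply says to apply Theorem~\ref{theorem:lower-bound} with $c=\Theta\!\left(\frac{\ln n}{\ln\ln n}\right)$, and you carry out exactly this application, supplying the explicit choice $c=\left\lceil\frac{(k+2)\ln n}{\ln\ln n}\right\rceil$ and the Stirling computation that the paper leaves implicit. Your added detail is sound and the argument goes through as written.
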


\section{Specific Valuation Structures} \label{sec:val-struct}
We now explore specific instances of \pro, in which valuations have a combinatorial structure. In particular, the definition of valuations assumes that each agent $i$ has a set $X_i$ from which she can select her {\em actions}. A {\em collection of actions} is just a set of agent-action pairs $(i,a)$, each consisting of an agent $i$ and an action $a\in X_i$. An agent $i$ may appear in at most one agent-action pair of a collection. A collection of actions that includes all agents is called {\em full}. A constraint $F$ defines which collections of actions are {\em feasible}. $F$ is called {\em downward closed} if every subset of a feasible collection of actions is feasible as well.

An action (sub)sequence $S$ produces a collection of actions $A$ as follows. Initially, $A$ is empty. When it is the turn of agent $i$ to act, she selects an action $\widehat{a}$ from $X_i$ which is best possible among all actions $a$ in $X_i$ such that $A\cup\{(i,a)\}$ is feasible. The agent-action pair $(i,\widehat{a})$ is then included in the collection of actions $A$. To decide which action is best possible, agent $i$ uses a strict ranking of the actions in $X_i$ (possibly dependent on other agent-action pairs in $A$ when this decision is made) and corresponding valuations for them that are consistent to it. We call the ranking of an agent and the corresponding valuations for their action {\em endogenous} when it is independent of actions by other agents. For an agent $i$ who acts after a collection of actions by other agents, we denote her best possible action as $\BR(i,A)$. 

We consider three different feasibility constraints, which define three different classes of valuations and corresponding restricted classes of \pro\ instances; others can be defined analogously. Essentially, with each class, we restrict our attention to \pro\ instances that may appear in practical applications and which are hopefully easier to solve than the very general ones. The three valuation structures we consider are described in the next three subsections.

\subsection{Matchings in bipartite graphs} The valuations are defined using a complete bipartite graph $G=K_{n,n}$, equipped with a function $w:[n]^2\rightarrow \R_{\geq 0}$, that assigns non-negative weights to its edges. The nodes in the left part of the bipartition correspond to the $n$ agents. The nodes in the right part correspond to items, which are to be given to the agents, under the restriction that each agent can get (at most) one item and each item will be given to (at most) one agent. Such an assignment of items to agents is represented by a {\em matching} in $G$. The weight $w(i,j)$ denotes the value agent $i$ has for item $j$. Agents pick their items according to an action sequence. When it is the turn of an agent to act, she picks the item of highest value among those that have not been picked by agents who acted before her.
    
Formally, each agent $i$ has a strict ranking $r_i$ that ranks the items in monotone non-increasing order with respect to the value of agent $i$ for them, breaking ties in a predefined (agent-specific) manner. We denote by $r_i(j)$ the rank of item $j$ for agent $i$. The items of rank $1$ and $n$ for agent $i$ are ones for which she has highest and lowest values, respectively. Also, $r_i(j)<r_i(j')$ implies that $w(i,j)\geq w(i,j')$. For an action subsequence $S\in \Snoti$, denote by $\mu_i(S)$ the item agent $i$ gets when she acts immediately after the agents in $S$. By overloading notation, we use $\mu(S)$ to denote the set of items picked by the agents who acted before agent $i$, i.e., $\mu(S)=\cup_{k\in S}{\mu_k(S^k)}$. Clearly, $\mu(\emptyset)=\emptyset$. Then, the action of agent $i$ when acting after the action subsequence $S\in \Snoti$ is to pick item $\mu_i(S)= \argmin_{j\in [n]\setminus \mu(S)}{r_i(j)}$, which gives her value $v_i(S)=w(i,\mu_i(S))$. Notice that, when all agents have acted, the items picked form a {\em perfect matching} in $G$.

We denote by \osm\ the special cases of \pro\ when applied on instances with the specific valuation structure as defined above.
To make a connection with the general combinatorial structure described above, each agent $i$ has the set of all items as its set of available actions $X_i$. The feasibility constraint $F$ requires that the collection of actions form a matching in $G$. Clearly, $F$ is downward closed. All agents have endogenous rankings/valuations.

\begin{lemma}
\osm\ valuations are monotone.
\end{lemma}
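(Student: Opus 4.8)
The plan is to verify the definition of monotonicity directly: for an arbitrary agent $i$ and action subsequences $S' \leq S$ in $\Snoti$, I must show $v_i(S') \geq v_i(S)$, where $v_i(S) = w(i, \mu_i(S))$ and $\mu_i(S) = \argmin_{j \in [n] \setminus \mu(S)} r_i(j)$. The key structural observation is that since $S' \leq S$ means $S'$ contains a subset of the agents of $S$ (in the same relative order), fewer items get picked before agent $i$ acts in $S'$ than in $S$. Concretely, the plan is to establish the containment $\mu(S') \subseteq \mu(S)$, i.e., the set of items already taken when $i$ acts after $S'$ is a subset of the set of items taken when $i$ acts after $S$.

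The crucial monotonicity mechanism is that agent $i$ optimizes over a \emph{larger} set of available items in the $S'$ scenario. Since $i$ picks the available item of minimum rank $r_i$, and $[n] \setminus \mu(S') \supseteq [n] \setminus \mu(S)$, the minimizer over the larger set has rank no worse than the minimizer over the smaller set; that is, $r_i(\mu_i(S')) \leq r_i(\mu_i(S))$. The ranking $r_i$ is defined so that lower rank corresponds to higher weight ($r_i(j) < r_i(j')$ implies $w(i,j) \geq w(i,j')$), so this immediately gives $w(i, \mu_i(S')) \geq w(i, \mu_i(S))$, which is exactly $v_i(S') \geq v_i(S)$. Thus the whole argument reduces to the single set-containment claim $\mu(S') \subseteq \mu(S)$.

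I expect the containment $\mu(S') \subseteq \mu(S)$ to be the main obstacle, and it requires care because $\mu(S)$ is defined recursively: each agent $k \in S$ picks $\mu_k(S^k)$ based on what earlier agents in $S$ took, so the picked sets are not obviously comparable just because the agent sets are nested. The clean way to handle this is induction on the length of the prefix, simultaneously proving the stronger invariant that at every point in the parallel execution of the two processes (along $S$ and along $S'$), each agent common to both subsequences picks under the process on $S'$ an item of rank at most the rank she picks under $S$, and that the running set of taken items for $S'$ stays contained in that for $S$. I would set up the induction along the agents of $S$ in order, tracking for each agent whether she also appears in $S'$, and using the inductive containment hypothesis to argue that an agent in $S'$ faces a superset of available items and hence the newly added item (if she is in $S'$) is already among those added along $S$ by that agent or coincides with a choice that keeps the containment intact.

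An alternative, possibly slicker route is to avoid the delicate simultaneous induction by appealing to a greedy/matroid-style exchange argument, observing that the item set $\mu(S)$ equals the items matched by a greedy serial process and that removing an agent from the sequence can only free up items; but the inductive argument above is the most self-contained and I would pursue it first, falling back on the exchange viewpoint only if the bookkeeping on relative orderings within $S' \leq S$ becomes unwieldy.
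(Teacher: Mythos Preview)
Your proposal is correct and follows essentially the same route as the paper: both reduce monotonicity to the containment $\mu(S') \subseteq \mu(S)$ and then deduce $r_i(\mu_i(S')) \leq r_i(\mu_i(S))$ and hence $v_i(S') \geq v_i(S)$. The only stylistic difference is that the paper proves the containment by a minimal-counterexample argument (take the earliest agent in $S'$ whose picked item falls outside $\mu(S)$ and derive a contradiction from the strict ranking), whereas you propose the equivalent forward induction along the agents of $S$; your Case~2 split (either the agent picks the same item in both runs, or she picks a strictly better item in $S'$ which must therefore already have been taken in $S$) is exactly the contrapositive of the paper's contradiction, so the two arguments coincide.
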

    
\begin{proof}
We will prove the claim by showing the following property: for every agent $i\in [n]$ and action subsequences $S,\widehat{S}\in \Snoti$ with $\widehat{S}\leq S$, it holds $\mu(\widehat{S})\subseteq \mu(S)$. This implies that $r_i(\mu_i(\widehat{S}))\leq r_i(\mu_i(S))$ and, consequently, $v_i(\widehat{S})\geq v_i(S)$.
        
To complete the proof, assume that the property is not true. Then, there exists an agent $k$ who gets in $\widehat{S}$ an item that does not belong to set $\mu(S^k)$ and is different than item $\mu_k(S^k)$. Assume, furthermore, that among the agents satisfying these conditions, $k$ is the agent that acts earliest in $\widehat{S}$. This would mean that the two items $\mu_k(\widehat{S}^k)$ and $\mu_k(S^k)$ are both available to agent $k$ when his turn in $\widehat{S}$ and $S$ comes. Picking a different item in $\widehat{S}$ and $S$ violates our definition of agent actions.
\end{proof}
    
\subsection{Arborescences in directed graphs} The valuations are defined using an $n$-node complete directed graph $G=([n],E)$ with a weight function $w:E\rightarrow \R_{\geq 0}$ on its edges. The nodes correspond to the agents. The weight $w(i,j)$ denotes the value of agent $i$ for the edge $(i,j)$ (i.e., the directed edge from node $i$ to node $j$). Each agent has as objective to connect to another node through a directed edge of maximum value so that this edge forms a {\em directed forest} with edges drawn before.

Formally, each agent $i$ has a strict ranking $r_i$ that ranks the outgoing edges from node $i$ in monotone non-increasing order with respect to the value of agent $i$ for them, breaking ties in a predefined manner. We denote by $r_i(j)$ the rank of edge $(i,j)$ for agent $i$. The edges of rank $1$ and $n-1$ for agent $i$ are ones for which agent $i$ has highest and lowest values, respectively. Also, $r_i(j)<r_i(j')$ implies that $w(i,j)\geq w(i,j')$. For an action subsequence $S\in \Snoti$, denote by $G_S$ the subgraph of $G$ consisting of the edges drawn during the actions of the agents in $S$. Clearly, $G_\emptyset$ consists of the $n$ nodes of $G$ and no edges. For an agent $i$ and action subsequence $S\in \Snoti$, we say an edge $(i,j)$ is {\em forbidden} if it forms a cycle when put together with the graph $G_S$. The action of agent $i$ when acting after the action sequence $S\in \Snoti$ is to draw an edge to node $\tau_i(S)$ defined as the node $j$ minimizing $r_i(j)$ under the constraint that edge $(i,j)$ is not forbidden. The agent draws no edge if all edges are forbidden and gets a value of $v_i(S)=0$. Notice that this will happen only for an agent who acts after all other agents. Otherwise, the value of agent $i$ is $v_i(S)=w(i,\tau_i(S))$. Notice that, when all agents have acted, the edges drawn form an {\em arborescence} in $G$, i.e., a tree whose edges are directed towards the root, which has out-degree $0$.

According to the terminology of the general combinatorial structure,
each agent $i$ has the set of all edges outgoing from node $i$ as its set of available actions $X_i$. The feasibility constraint $F$ requires that the collection of actions form a directed forest in $G$; this is clearly downward closed. Again, all agents have endogenous rankings/valuations. We use the term \osa\ to refer to these instances of \pro.

\begin{lemma}
\osa\ valuations are monotone.
\end{lemma}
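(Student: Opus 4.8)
The plan is to mirror the proof of monotonicity for \osm, since the two valuation structures are analogous: in both cases, the value an agent obtains is determined by a rank-minimization over a shrinking feasible set, so the essential task is to show that acting after a shorter subsequence never leaves an agent with \emph{fewer} available (non-forbidden) options. Concretely, I would fix an agent $i\in[n]$ and two action subsequences $S,\widehat{S}\in\Snoti$ with $\widehat{S}\leq S$, and aim to prove that the set of edges forbidden to agent $i$ after $\widehat{S}$ is a subset of those forbidden after $S$. Since agent $i$ picks the outgoing edge of smallest rank $r_i$ among the non-forbidden ones, a smaller forbidden set implies $r_i(\tau_i(\widehat{S}))\leq r_i(\tau_i(S))$, which by the rank-value consistency $r_i(j)<r_i(j')\Rightarrow w(i,j)\geq w(i,j')$ gives $v_i(\widehat{S})\geq v_i(S)$, as required. (The boundary case where all edges are forbidden in $S$ and $v_i(S)=0$ is immediate.)

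First I would establish the key structural fact that $G_{\widehat{S}}\subseteq G_S$, i.e., every edge drawn during the actions of $\widehat{S}$ is also drawn during the actions of $S$. This is the arborescence analogue of the inclusion $\mu(\widehat{S})\subseteq\mu(S)$ from the matching lemma, and I would prove it by the same earliest-disagreement argument: suppose not, and let $k$ be the agent appearing earliest in $\widehat{S}$ who draws a different edge in $\widehat{S}$ than in $S$ (or draws an edge in one but not the other). Because $k$ is the earliest such agent, the graphs built from the agents preceding $k$ coincide, i.e., $G_{\widehat{S}^k}=G_{S^k}$ restricted to those agents' edges agree, and hence the set of forbidden edges for $k$ is identical in both runs at the moment $k$ acts. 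Since the feasibility constraint (acyclicity of the directed forest) depends only on the already-drawn edge set, $k$ faces exactly the same menu of non-forbidden outgoing edges and the same ranking $r_k$, so $k$ must select the identical edge $\tau_k$ in both subsequences --- contradicting the choice of $k$.

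Once $G_{\widehat{S}}\subseteq G_S$ is in hand, the remainder is short: if an edge $(i,j)$ is forbidden after $\widehat{S}$, it closes a cycle together with $G_{\widehat{S}}$; since $G_{\widehat{S}}\subseteq G_S$, that same cycle lives inside $G_S\cup\{(i,j)\}$, so $(i,j)$ is forbidden after $S$ as well. Thus the forbidden set only grows as we pass from $\widehat{S}$ to $S$, the available set only shrinks, and the rank of agent $i$'s chosen edge can only increase (weakly), completing the monotonicity argument.

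The main obstacle I expect is the earliest-disagreement step, and specifically justifying cleanly that the forbidden sets for agent $k$ truly coincide. In the matching case this was transparent because ``already-picked items'' is a simple set; for arborescences, ``forbidden'' is a global acyclicity condition, so I must be careful that the induction hypothesis really gives me $G_{\widehat{S}^k}=G_{S^k}$ as edge sets (not merely that the earlier agents picked from the same menus), and that acyclicity of $G\cup\{(i,j)\}$ is a monotone (downward-closed) property in the drawn-edge set --- which is exactly the observation that a forest constraint is downward closed, already noted when \osa\ was defined. Handling the degenerate terminal agent (who may be forced to draw no edge and receive value $0$) needs a sentence but poses no real difficulty, since value $0$ trivially satisfies the monotonicity inequality.
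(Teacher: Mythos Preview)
Your central structural claim $G_{\widehat{S}}\subseteq G_S$ is false, and this breaks the argument. Take three agents with $r_1(2)<r_1(3)$ and $r_2(1)<r_2(3)$, and let $S=(2,1)$, $\widehat{S}=(1)$. In $\widehat{S}$ agent~$1$ draws $(1,2)$; in $S$ agent~$2$ first draws $(2,1)$, forcing agent~$1$ to draw $(1,3)$. Thus $G_{\widehat{S}}=\{(1,2)\}\not\subseteq G_S=\{(2,1),(1,3)\}$. The flaw in your earliest-disagreement step is the assertion that ``the graphs built from the agents preceding $k$ coincide, i.e., $G_{\widehat{S}^k}=G_{S^k}$'': since $\widehat{S}$ is only an ordered \emph{subset} of $S$, the prefix $S^k$ may contain agents absent from $\widehat{S}^k$, and their edges sit in $G_{S^k}$ but not in $G_{\widehat{S}^k}$. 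So $k$ does \emph{not} face the same forbidden set in the two runs, and nothing prevents $k$ from choosing differently. The matching analogy is misleading here: the inclusion $\mu(\widehat{S})\subseteq\mu(S)$ in the \osm\ proof concerns the \emph{set of items} picked (which may be picked by different agents in the two runs), not an agent-by-agent identity of choices; the arborescence edge-set inclusion you posit is a strictly stronger, and false, statement.

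The paper sidesteps this by proving directly the weaker (and sufficient) claim that the specific edge $\tau_i(S)$ agent $i$ draws after $S$ is not forbidden after $\widehat{S}$. Its contradiction argument does take an earliest violator $k$, but then analyzes the cycle that the edge $\tau_k(S)$ would close in $G_{\widehat{S}^k}$: walking along that cycle to the farthest node $j_1$ that behaves differently in $S$ and $\widehat{S}$, one shows the edge $j_1$ draws in $\widehat{S}$ was in fact available to $j_1$ in $S$ (because $k$ has not yet acted there), forcing $j_1$'s $S$-edge to have strictly smaller rank, which in turn means that $S$-edge must already have been forbidden for $j_1$ in $\widehat{S}$ --- contradicting the minimality of $k$. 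To repair your approach you would need either this cycle-chasing argument or a correct replacement for the edge-set inclusion (e.g., a reachability statement rather than an edge-containment one).
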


\begin{proof}
Consider agent $i\in [n]$ and two action subsequences $S,\widehat{S}\in \Snoti$ such that $\widehat{S}\leq S$. We will show the following property: the edge agent $i$ draws when she acts after the action subsequence $S$ is not forbidden when she acts after the action subsequence $\widehat{S}$. Hence, we get $v_i(\widehat{S})\leq v_i(S)$, as desired. 

For the sake of contradiction, assume that the property is not true. Then, there should exist action subsequences $S$ and $\widehat{S}$ with $\widehat{S}\leq S$ and an agent $k$ in $\widehat{S}$, for which the edge $(k,j)$ she picks in $S$ is forbidden when her turn to act comes in $\widehat{S}$. Among all agents who may satisfy this condition, we specifically select $k$ to be the one who acts earliest in $\widehat{S}$. As edge $(k,j)$ is forbidden for agent $k$ in $\widehat{S}$, let $C$ be the cycle it forms together with $G_{\widehat{S}}$. Let $j_1$ be the node/agent that is furthest from $k$ in $C$ among those drawing different edges in $S$ and $\widehat{S}$. Let $(j_1,j_2)$ and $(j_1,j_3)$ be the edges agent $j_1$ draws in $\widehat{S}$ and $S$, respectively.

We now claim that when it is the turn of agent $j_1$ to act in $S$, the edge $(j_1,j_2)$ is not forbidden. The reason is that the only nodes to which node $j_1$ can be connected via edges that were drawn in $S$ before agent $j_1$ acts, are those in the path from $j_2$ to $k$ in $C$. As agent $k$ has not acted before agent $j_1$, no cycle can be formed if agent $j_1$ draws edge $(j_1,j_2)$. Hence, $r_{j_1}(j_2)>r_{j_1}(j_3)$. But then, agent $j_1$ who acts before agent $k$ in $\widehat{S}$ and draws edge $(j_1,j_2)$ must have the edge $(j_1,j_3)$ she drew in $S$ as forbidden. This contradicts our assumption on the timing of agent $k$.
\end{proof}

\subsection{Satisfiability of clauses} The valuations are defined using $n$ Boolean variables $x_1$, $x_2$, ..., $x_n$, a set $\mathcal{C}$ of $m$ clauses $C_1$, $C_2$, ..., $C_m$, and a function $w:\mathcal{C}\rightarrow \R_{\geq 0}$ that assigns a non-negative weight  to each clause. Agent $i\in[n]$ controls the variable $x_i$. Informally, when acting according to an action sequence, an agent sets the value of her variable to either True or False, so that the additional weight in the newly satisfied clauses is maximized. The value of an agent is then simply this additional weight of clauses that became satisfied due to her action.

Formally, for agent $i\in [n]$ and a set of clauses $M$, we denote by $P(i,M)$ (respectively, $N(i,M)$) the total weight of the clauses in $M$ which contain the positive literal $x_i$ (respectively, the negative literal $\overline{x_i}$). For an action subsequence $S$, we denote by $C_S$ the set of clauses that are not satisfied by the actions of the agents in $S$. Clearly, $C_{\emptyset}=\mathcal{C}$. The action of agent $i$ when acting after the agents in the action subsequence $S\in \Snoti$ is to set her variable $x_i$ to True if $P(i,C_S)>N(i,C_S)$ and to False if $P(i,C_S)<N(i,C_S)$. Ties, i.e., when $P(i,C_S)=N(i,C_S)$, are resolved by setting $x_i$ to a predefined value (True or False). Then, the valuation $v_i(S)$ is simply $P(i,C_S)$ or $N(i,C_S)$, whichever is higher, i.e., $v_i(S)=\max\{P(i,C_S),N(i,C_S)\}$.

 According to the terminology above, each agent $i$ has True and False as her two available actions in $X_i$ and there is no feasibility constraint. In contrast to the two classes above, the agent rankings/values of the two actions are non-endogenous. We denote by \oss\ the special cases of \pro\ when applied on satisfiability instances.

\begin{lemma}
\oss\ valuations are not monotone.
\end{lemma}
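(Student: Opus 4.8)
The plan is to show non-monotonicity by exhibiting a single counterexample: an $\oss$ instance together with an agent $i$ and two action subsequences $\widehat{S}\leq S$ for which $v_i(\widehat{S}) < v_i(S)$, violating the defining inequality. The intuition driving the construction is that, unlike in $\osm$ and $\osa$, an agent's value in $\oss$ is \emph{non-endogenous}: it depends on which clauses remain unsatisfied, and having \emph{more} predecessors act can destroy clauses that would otherwise have counted against agent $i$'s chosen polarity, thereby \emph{increasing} the weight she collects. So the counterexample should be engineered so that some predecessor present in the longer sequence $S$ (but absent from the shorter $\widehat{S}$) pre-satisfies clauses containing the literal opposite to the one $i$ ends up setting.

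Concretely, I would take $n=2$ agents controlling $x_1,x_2$ and pick agent $i=2$. First I would construct clauses so that, acting after the empty sequence $\widehat{S}=\emptyset$, agent $2$ faces a near-balanced choice (say $P(2,\mathcal{C})$ and $N(2,\mathcal{C})$ both moderate, with $P\ge N$ so she sets $x_2$ to True and collects $v_2(\emptyset)=P(2,\mathcal{C})$). Then I would arrange that when agent $1$ acts first, as in $S=(1)$, agent $1$'s action satisfies some of the clauses containing $\overline{x_2}$ (the negative literal), so that in the residual set $C_S$ the negative weight $N(2,C_S)$ drops while the positive weight $P(2,C_S)$ is left comparatively intact—pushing the max upward, or at least making agent $2$ strictly better off. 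Since $\emptyset \leq (1)$ holds by definition of $\le$, this yields $v_2(\emptyset) < v_2((1))$, contradicting monotonicity.

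The one subtlety, and the step I would double-check carefully, is the bookkeeping of $C_S$: I must verify that the clauses I intend agent $1$ to satisfy are \emph{actually} satisfied by agent $1$'s \emph{best} response (the action dictated by comparing $P(1,\mathcal{C})$ and $N(1,\mathcal{C})$), rather than the action I wish she would take. In other words, agent $1$'s behavior is forced by her own valuation, so I need to choose weights so that her optimal move simultaneously (i)~is determined unambiguously and (ii)~removes the right clauses to boost agent $2$'s later value. This is purely a matter of selecting small integer weights and listing the few clauses explicitly; there is no genuine obstacle, only the need to present a clean, verifiable table of $P$ and $N$ values for both agents under both sequences. A compact example with two or three clauses on two variables should suffice, and I would conclude by exhibiting the numbers and pointing out the violated inequality.
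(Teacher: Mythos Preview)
Your proposed two-agent construction cannot work. With $\widehat{S}=\emptyset$ and $S=(1)$, the set $C_{(1)}$ of clauses left unsatisfied after agent~$1$ acts is always a subset of $C_\emptyset=\mathcal{C}$, since acting can only satisfy clauses, never un-satisfy them. Because weights are non-negative, this gives both $P(2,C_{(1)})\le P(2,\mathcal{C})$ and $N(2,C_{(1)})\le N(2,\mathcal{C})$, and hence
\[
v_2((1))=\max\{P(2,C_{(1)}),N(2,C_{(1)})\}\le \max\{P(2,\mathcal{C}),N(2,\mathcal{C})\}=v_2(\emptyset).
\]
Your sentence ``pushing the max upward'' is the error: dropping $N$ while $P$ stays intact never raises $\max\{P,N\}$. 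Since $\emptyset$ and $(1)$ are the only subsequences in $\mathcal{S}_{-2}$, every two-agent \oss\ instance is in fact monotone, so no choice of weights will rescue the plan.

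The mechanism you are missing---and the one the paper exploits---requires at least three agents. The non-endogeneity matters not for agent~$i$ directly, but for a \emph{common predecessor} who appears in both $\widehat{S}$ and $S$ and makes \emph{different} choices in the two sequences. The paper takes $\widehat{S}=(2)\le S=(1,2)$ and arranges clauses so that agent~$2$ sets $x_2$ to True when acting alone but to False when acting after agent~$1$. Because agent~$2$'s action differs, the residual clause sets $C_{(2)}$ and $C_{(1,2)}$ are not related by inclusion, and one can then engineer $v_3((1,2))>v_3((2))$. Your write-up should therefore move to a three-agent example and focus the weight-tuning on forcing the middle agent to flip her decision between the two subsequences.
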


\begin{proof}
Consider the instance with three agents and the following clauses: $x_1\vee x_2$ of weight $6$, $\overline{x_1} \vee x_2 \vee x_3$ of weight $2$, $\overline{x_1} \vee \overline{x_2}\vee x_3$ of weight $1$, and $\overline{x_1}\vee \overline{x_2}$ of weight $2$. Consider the action subsequence $12$. Agent $1$ sets $x_1=1$ and then agent $2$ sets $x_2=0$, leaving $\overline{x_1}\vee x_2\vee x_3$ as the only unsatisfied clause. Hence, agent $3$ sets $x_3=1$ and gets a value of $v_3(12)=2$. Now, consider the action subsequence that consists only of agent $2$. Agent $2$ set $x_2=1$ and, then, the only clause that is unsatisfied and contains variable $x_3$ is $\overline{x_1} \vee \overline{x_2}\vee x_3$. We get $v_3(2)=1$, which violates monotonicity of $v_3$.
\end{proof}

\subsection{Three key questions} 
We devote the last part of this section for introducing the three {\em key questions} that we consider, using the term \osf\ as a proxy for a valuation structure defined by a feasibility constraint $F$. 

With specific valuation structures, we hope to be able to achieve considerably better approximations with polynomially many queries, compared to the upper bounds for general \pro\ instances in Theorem~\ref{thm:rand}. We remark that, like in our algorithms in Section~\ref{sec:general-instances}, the only tool that is available to algorithms for these problems is query access to the valuations. However, knowledge of the fact that the valuations have a specific underlying structure may prove useful.
\smallskip

\begin{quote}\begin{mdframed}
    {\bf Question 1:} How well can algorithms that use polynomially many queries approximate \osf?
\end{mdframed}\end{quote}
\smallskip

We remark here that the computational complexity of the underlying optimization problem does not have any implications for the query complexity of (approximating) the corresponding \pro\ instance. We demonstrate this with a tailored class of instances (defining the special case \osi\ of \pro) that use independent sets to define the valuations. Does the fact that the maximum independent set problem is NP-hard (even to approximate) imply any lower bound on the query complexity of \osi? We will see that the answer is negative. 

An \osi\ instance is defined with an $n$-node undirected graph $G$. Each node corresponds to an agent. For an agent $i\in [n]$, the value $v_i(S)$ for an action subsequence $S\in \Snoti$ is $1$ if the nodes in $S\cup \{i\}$ form an independent set and $0$ otherwise. It is not difficult to see that these valuations are monotone. Furthermore, the social welfare $\SW(\pi)$ of an action sequence $\pi$ is equal to the length of the maximal prefix of $\pi$ consisting of agents whose corresponding nodes form an independent set in $G$. Hence, the optimal social welfare among all action sequences is equal to the size of the maximum independent set. Now, notice that a simple algorithm that uses only $O(n^2)$ queries can learn the underlying graph $G$, compute a maximum independent set in it, and, consequently, an action sequence with maximum social welfare. It just suffices to query the value $v_i(j)$ for every pair of agents $i$ and $j$. Then, the edge $(i,j)$ exists in $G$ if $v_i(j)=0$ and does not exist if $v_i(j)=1$.

Next, we would like to understand how rich the space of action sequences is with respect to all feasible solutions to the underlying combinatorial problem.
\smallskip

\begin{quote}\begin{mdframed}
    {\bf Question 2:} What is the computational complexity of the following problem? Given a collection of actions that satisfies the feasibility constraint $F$, decide whether it can be produced by an action sequence.
\end{mdframed}\end{quote}
\smallskip

An action sequence can be viewed as a greedy algorithm for computing a solution to the combinatorial optimization problem \maxF, whose objective is to compute a feasible collection of actions of maximum total value for the agents. An important question about the quality of action sequences is then:
\smallskip

\begin{quote}\begin{mdframed}
    {\bf Question 3:} How well can the best action sequence approximate the optimal solution to the underlying $\maxF$ combinatorial optimization problem?
\end{mdframed}\end{quote}
\smallskip

\noindent We introduce the concept of the {\em price of serial dictatorship} (PoSD) of \maxF\ to quantify the answer to Question 3. The price of serial dictatorship of an instance of \maxF\ is the ratio of the maximum total value for the agents among all the feasible collections of actions over the social welfare of the best action sequence. Then, the price of serial dictatorship of \maxF\ is the worst-case PoSD over all \maxF\ instances. Note that PoSD is a number higher or equal to $1$.

\section{Matchings in Bipartite Graphs} \label{sec:matching}
We now present answers to these three key questions for \osm. An answer to Question 1 can be obtained using Algorithm~\ref{alg:matching-2-approx}, which is presented in the following.

 {
 	\begin{algorithm}[ht]
 		{ 

 			{\bf Input:} Query access to an $n$-agent \osm\ instance 

 			{\bf Output:} An action sequence $\pi$
 			\caption{A $2$-approximation algorithm for \osm}	
 			\label{alg:matching-2-approx}
 			\begin{algorithmic}[1]
 				\STATE Initialize $\pi\leftarrow \emptyset$; $R\leftarrow [n]$; 
 				\FOR{$k \leftarrow 1$ to $n$}
 				\STATE Let $i^*\leftarrow \argmax_{i\in R}{v_i(\pi)}$;
 				\STATE Update $\pi(k)\leftarrow i^*$; $R\leftarrow R\setminus \{i^*\}$;
 				\ENDFOR
 				\RETURN $\pi$;
   		\end{algorithmic}
 		}
 	\end{algorithm}
 }		

Algorithm~\ref{alg:matching-2-approx} builds the action sequence $\pi$ gradually in steps. In the $k$-th step, it selects the agent $i^*$ (among those in the set variable $R$, which contains the agents not included in $\pi$ yet) who attains maximum value if she acts immediately after the action subsequence built so far. Notice that $\pi$ is an empty action subsequence initially and is augmented in each execution of the for loop of Algorithm~\ref{alg:matching-2-approx}.

\begin{theorem}\label{thm:osm-2-approx}
    On input $n$-agent \osm\ instances, Algorithm~\ref{alg:matching-2-approx} uses $O(n^2)$ queries and computes a $2$-approximate solution.
\end{theorem}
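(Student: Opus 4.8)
The plan is to handle the query bound and the approximation guarantee separately. The query bound is immediate: in the $k$-th iteration the remaining set $R$ has size $n-k+1$, and the algorithm issues exactly one query $v_i(\pi)$ per $i\in R$ to find the maximizer $i^*$; summing over all iterations gives $\sum_{k=1}^{n}(n-k+1)=n(n+1)/2=O(n^2)$. For the approximation ratio, the key observation is that the quantity compared in line~3 is precisely the weight of the heaviest item still available to agent $i$, i.e.\ $v_i(\pi)=\max_{j\in[n]\setminus\mu(\pi)}w(i,j)$, which follows from the definition $v_i(S)=w(i,\mu_i(S))$ together with the rank--weight consistency $r_i(j)<r_i(j')\Rightarrow w(i,j)\geq w(i,j')$. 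Hence choosing $i^*=\argmax_{i\in R}v_i(\pi)$ and letting $i^*$ take her best available item is exactly the same as appending to the current partial matching the maximum-weight edge both of whose endpoints are still free. In other words, Algorithm~\ref{alg:matching-2-approx} produces precisely the matching $M$ obtained by the classical greedy maximum-weight matching procedure, and $\SW(\pi)=\sum_{i\in[n]}w(i,\mu_i(\pi^i))=w(M)$. Since every action sequence produces a perfect matching whose weight equals its social welfare, the optimal social welfare satisfies $\OPT\leq w(M^*)$, where $M^*$ is a maximum-weight matching; it therefore suffices to prove $w(M)\geq\tfrac12 w(M^*)$.

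I would establish this with the standard charging argument, stated directly in terms of the edges the algorithm adds. List the greedy edges $g_1,\dots,g_n$ in the order they are added; by the observation above, when $g_k$ is added it is the maximum-weight edge whose two endpoints are both free at that moment. For each edge $e=(a,b)\in M^*$ (agent $a$ matched to item $b$), charge $e$ to the \emph{first} greedy edge $g_k$ that shares an endpoint with it; such $g_k$ exists because $M$ is perfect, so $a$ and $b$ are both eventually matched. Since $g_k$ is the first greedy edge touching $a$ or $b$, both are free just before $g_k$ is added, so $(a,b)$ is available at that step, whence $w(g_k)\geq w(e)$. Because $M^*$ is a matching, each greedy edge $g_k=(i_k,j_k)$ can be charged at most once through its agent-endpoint $i_k$ and at most once through its item-endpoint $j_k$, so at most twice in total. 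Summing yields $w(M^*)=\sum_{e\in M^*}w(e)\leq\sum_k 2\,w(g_k)=2\,w(M)$, and therefore $\SW(\pi)=w(M)\geq\tfrac12 w(M^*)\geq\tfrac12\OPT$.

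The step I would be most careful about is the equivalence asserted in the first paragraph: that greedily picking the highest-value remaining agent and then letting her pick greedily coincides with always adding the globally heaviest available edge. This is the genuine content of the proof, and it rests entirely on the fact that $v_i(\pi)$ depends on the prefix only through the set $\mu(\pi)$ of already-taken items, so that ``best available for $i$'' is literally $\max_{j\notin\mu(\pi)}w(i,j)$; note that monotonicity is not even needed here. Everything downstream is the textbook greedy-matching analysis. The only remaining point to check is that comparing against $M^*$ is legitimate: in $K_{n,n}$ with non-negative weights any matching extends to a perfect matching of no smaller weight, so a maximum-weight matching may be taken to be perfect and $\OPT\le w(M^*)$ is sound.
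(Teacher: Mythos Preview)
Your proof is correct and uses essentially the same charging argument as the paper. Both rest on the observation that when the algorithm commits an edge, it is the heaviest edge with both endpoints free, so any optimal edge whose endpoints are still untouched at that moment has no larger weight; each optimal edge is then charged to adjacent greedy edges, at most two per greedy edge.

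The main difference is in framing. You make explicit what the paper leaves implicit, namely that Algorithm~\ref{alg:matching-2-approx} is exactly the greedy maximum-weight matching procedure, and you then invoke the textbook $2$-approximation of greedy matching against the globally maximum-weight matching $M^*$. The paper instead compares directly against the matching $\widehat{M}$ produced by the optimal action sequence and runs the charging argument ad hoc. Your route buys a slightly stronger statement (you bound against $w(M^*)\geq\OPT$, not just against $\OPT$) and is arguably cleaner, since once the equivalence with greedy matching is established the rest is standard. The paper's route avoids naming the connection to greedy matching and stays entirely within the action-sequence vocabulary, which keeps it self-contained.
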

\begin{proof}
The bound on the number of queries is obvious: the algorithm makes $n$ executions of the for loop and, in each of them, it makes a query $(i,\pi)$ for each of the at most $n$ agents in set $R$.  

Let $\widehat{\pi}$ be an optimal action sequence for a given \osm\ instance consisting of a complete bipartite graph $G = K_{n,n}$, edge-weight function $w:[n]^2\rightarrow \R_{\geq 0}$, and rank information $\{r_i\}_{i\in [n]}$ and $\pi$ the action sequence returned by Algorithm~\ref{alg:matching-2-approx}. Let $M$ and $\widehat{M}$ be the sets of edges defined by the items picked in action sequences $\pi$ and $\widehat{\pi}$. Let $e_i$ be the edge that is incident to node $i$ in $\widehat{M}$. Let $C$ be the set of agents who pick a different item in $\pi$ and $\widehat{\pi}$. For each agent $i\in C$, denote by $e_i^+$ and $e_i^-$ the two edges of $M$ which are incident to edge $e_i$. Notice that the definition of Algorithm~\ref{alg:matching-2-approx} implies that, for any $i\in C$, the agent who picks the first among edges $e_i^+$ and $e_i^-$ gets a value that is at least as high as the weight of edge $e_i$ (since $e_i$ was available at that point and agent $i$ had not picked any item yet). Hence, $w(e_i)\leq w(e_i^+)+w(e_i^-)$ and 
\begin{align*}
    \SW(\widehat{\pi}) &= \sum_{i\in [n]}{w(e_i)} = \sum_{i\in C}{w(e_i)}+\sum_{i\in [n]\setminus C}{w(e_i)} \leq \sum_{i\in C}{(w(e_i^+)+w(e_i^-))}+\sum_{i\in [n]\setminus C}{w(e_i)}\\
    &= 2\cdot \sum_{i\in C}{w(i,\mu_i(\pi^i))}+\sum_{[n]\setminus C}{w(i,\mu_i(\pi^i))}\leq 2\cdot \SW(\pi),
\end{align*}
as desired.
\end{proof}

Despite the simplicity of Algorithm~\ref{alg:matching-2-approx}, improvements seem to be technically challenging:

\smallskip

\begin{quote}\begin{mdframed}
{\bf Open Question 1:} Are there algorithms that solve \osm\ optimally using polynomially many queries?
\end{mdframed}\end{quote}

\smallskip

We suspect that the answer to Open Question 1 is positive, at least for the special case of \osm\ where each agent has different valuations for the items. Unfortunately, our attempts to design an algorithm that learns the underlying valuations in polynomial time have not been successful so far.
In our answer to Question 2, we will first use our more general terminology as we intend to provide an answer for a large class of valuation structures; our answer to Question 2 for \osm\ will follow as a corollary.

\begin{theorem}\label{thm:as-for-F}
Consider an \osf\ instance defined by a downward closed feasibility constraint $F$ and endogenous valuations, and let $A=\{(i,a_i)\}_{i\in [n]}$ be a full feasible collection of actions. Deciding whether there exists an action sequence $\pi$ that produces $A$ can be done in polynomial time. 
\end{theorem}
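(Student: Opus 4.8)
The plan is to reduce the decision problem to a greedy, topological-style construction that appends agents one at a time, and to establish its correctness through an exchange argument that leans on the downward-closedness of $F$. First I would pin down exactly when an action sequence $\pi$ produces $A$. Since $\pi$ must generate the collection $A$, the actions taken by the agents preceding $i$ in $\pi$ are precisely $\{(j,a_j) : j\in \pi^i\}$. Because the valuations are endogenous, agent $i$'s ranking $r_i$ is fixed, so $\pi$ produces $A$ if and only if, for every agent $i$, the best response $\BR(i,\{(j,a_j):j\in\pi^i\})$ equals $a_i$. Unfolding the definition of $\BR$, this is equivalent to two conditions: (i) the action $a_i$ is available, i.e.\ $\{(j,a_j):j\in\pi^i\}\cup\{(i,a_i)\}$ is feasible, and (ii) every action $a$ ranked above $a_i$ (with $r_i(a)<r_i(a_i)$) is \emph{blocked}, i.e.\ $\{(j,a_j):j\in\pi^i\}\cup\{(i,a)\}$ is infeasible. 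Condition (i) holds automatically for any $\pi^i$, since that collection is a subset of the feasible collection $A$ and $F$ is downward closed; hence only the blocking condition (ii) is really at stake.

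The key observation I would isolate next is a monotonicity property of blocking, which is exactly the contrapositive of downward-closedness: if a set of predecessors blocks some action $a$ of agent $i$, then any larger set of predecessors also blocks $a$, because a superset of an infeasible collection is infeasible. Consequently the predicate ``$i$'s best response equals $a_i$'' is monotone in the set of agents placed before $i$: once it holds it keeps holding as more agents are prepended, while $a_i$ itself never becomes unavailable. This motivates the following algorithm: maintain a placed prefix $P$ (initially empty); repeatedly find any unplaced agent $i$ whose best response after $\{(j,a_j):j\in P\}$ equals $a_i$, append it to the sequence and add it to $P$; declare success if all $n$ agents get placed and failure otherwise. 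By induction the collection produced by the greedy prefix is always exactly $\{(j,a_j):j\in P\}$, and each readiness test amounts to checking feasibility of $a_i$ together with infeasibility of the finitely many higher-ranked actions, so the whole procedure runs in polynomial time.

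It then remains to prove correctness, and soundness is immediate: if the algorithm places all agents, then by construction each agent's best response at her turn is $a_i$, so the produced collection is $A$. For completeness I would argue the contrapositive via an exchange step. Suppose some valid ordering $\pi$ exists but the greedy halts at a proper prefix $P\subsetneq[n]$ with no ready agent. Let $i$ be the first agent of $\pi$ that does not lie in $P$. Then every agent preceding $i$ in $\pi$ already belongs to $P$, so $\pi^i\subseteq P$. Since $\pi$ is valid, all actions ranked above $a_i$ are blocked by $\pi^i$; by the monotonicity property they are therefore blocked by the larger set $P$ as well, whereas $a_i$ remains available. Hence $i$'s best response after $P$ is $a_i$, so $i$ is in fact ready---contradicting that the greedy is stuck.

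I expect this exchange argument, combining the ``first unplaced agent'' trick with the monotonicity of blocking, to be the main point of the proof. The remaining ingredients (the reformulation of producibility through $\BR$, the automatic availability of $a_i$, and the polynomial running time of the readiness tests) are routine once downward-closedness is exploited. A useful sanity check I would keep in mind is that the purely ``global'' necessary condition---that $(A\setminus\{(i,a_i)\})\cup\{(i,a)\}$ is infeasible for every $i$ and every $a$ ranked above $a_i$---is not by itself sufficient, since it ignores the requirement that the blocking agents appear \emph{before} $i$; this is precisely the timing issue that the greedy resolves, which is why the ordering-based construction rather than a static test is needed.
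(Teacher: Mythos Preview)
Your proposal is correct and follows essentially the same approach as the paper: both use the greedy algorithm that repeatedly appends any agent whose best response after the current prefix equals $a_i$, and both prove completeness by taking the earliest (in a valid $\pi$) agent not yet placed and using downward-closedness to argue that she is ready. Your phrasing in terms of ``monotonicity of blocking'' is the contrapositive of the paper's ``the set of available actions can only shrink as the prefix grows,'' so the two arguments are dual formulations of the same step.
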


We prove Theorem~\ref{thm:as-for-F} using Algorithm~\ref{alg:feasible-collection-action-sequence}, which constructs the action sequence $\pi$ in a greedy way. It uses the set variable $P$ to keep the agents whose actions have not been checked yet; the set variable $M$ keeps the agent-action pairs with agents in $P$. The action sequence is stored in variable $\pi$, which is returned at the end. In case no action sequence can be computed, the algorithms returns a Fail report.

{
 	\begin{algorithm}[ht]
 		{ 

 			{\bf Input:} A feasibility constraint $F$, a feasible full collection of actions $\{i,a_i\}_{i\in[n]}$, rank information $\{r_i\}_{i\in [n]}$
 			
 			{\bf Output:} An action sequence that produces $\{i,a_i\}_{i\in[n]}$  or Fail report 
 			\caption{Deciding whether a full collection of actions can be produced by an action sequence}	
 			\label{alg:feasible-collection-action-sequence}
 			\begin{algorithmic}[1]
 				\STATE Initialize $\pi \leftarrow \emptyset$; $P\leftarrow [n]$; $M\leftarrow \emptyset$; $k\leftarrow 1$; 
 				\WHILE{$P\not=\emptyset$}
 				    \STATE $\text{Top}\leftarrow \{i\in P: a_i=\BR(i,M)\}$;
 				    \IF{$\text{Top}\not=\emptyset$} 
 				    \STATE Let $i^*\in \text{Top}$;
 				    \STATE $\pi(k)\leftarrow i^*$; $P\leftarrow P\setminus \{i^*\}$; $M\leftarrow M \cup \{(i^*,a_{i^*})\}$; $k\leftarrow k+1$;
 				    \ELSE
 				    \RETURN Fail;
 				    \ENDIF
 				\ENDWHILE
                \RETURN $\pi$;
   		\end{algorithmic}
 		}
 	\end{algorithm}
 }		
 
\begin{proof}
Clearly, the algorithm runs in polynomial time assuming we have polynomial-time access to $\BR(i,M)$. To prove its correctness, first notice that it indeed outputs an action sequence that produces the collection of actions $A$ if it never fails. Now, assume that $A$ can be produced by an action sequence $\widehat{\pi}$; we will show that the algorithm never fails. Indeed, at the beginning of any execution of the while loop, let $i$ be the agent in set $P$ that acts earliest in $\widehat{\pi}$. Denote by $\widehat{M}$ the collection of actions decided by the action subsequence $\widehat{\pi}^i$ (while $M$ is the collection of actions that have been checked prior to the current while loop execution. Since $M\cup\{(i,a_i)\} \subseteq A$, downward closedness of $F$ implies that $M\cup\{(i,a_i)\}$ is feasible. Furthermore, since $\widehat{M}\cup\{(i,x)\} \subseteq M\cup\{(i,x)\}$ for any $x\in X_i$, we conclude that any action that is available to agent $i$ in $\pi^i$ was also available in $\widehat{\pi}^i$. Endogeneity of valuations then implies that $a_i=\BR(i,M)$. 
\end{proof}

For \osm, Theorem~\ref{thm:as-for-F} implies the following corollary.
\begin{corollary}\label{cor:as-for-matching}
Given an instance of \osm\ and a perfect matching $\mu$ in the underlying bipartite graph $G$, deciding whether there exists an action sequence $\pi$ that produces $\mu$ can be done in polynomial time. 
\end{corollary}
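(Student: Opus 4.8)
The plan is to recognize Corollary~\ref{cor:as-for-matching} as a direct instantiation of Theorem~\ref{thm:as-for-F} for the \osm\ valuation structure, so that the bulk of the work amounts to checking that \osm\ satisfies the hypotheses of that theorem and that ``producing the matching $\mu$'' coincides with ``producing an associated collection of actions.'' First I would translate the perfect matching $\mu$ into a full feasible collection of actions: for each agent $i\in[n]$, let $a_i$ be the item that $\mu$ assigns to $i$, and set $A=\{(i,a_i)\}_{i\in[n]}$. Since $\mu$ is a perfect matching, every agent appears in exactly one pair, so $A$ is full; and since $A$ encodes exactly the edges of $\mu$, it forms a matching in $G$ and is therefore feasible under the \osm\ constraint. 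Moreover, an action sequence produces $\mu$ precisely when each agent, on her turn, picks her $\mu$-partner, i.e., exactly when it produces $A$. Thus the two decision problems are identical, and it suffices to decide whether $A$ is producible.

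Next I would verify the two structural hypotheses required by Theorem~\ref{thm:as-for-F}. The feasibility constraint $F$ for \osm\ requires the collection of actions to form a matching in $G$, and this is downward closed, as already observed, because any subset of a matching is again a matching. The rankings $r_i$ in \osm\ are endogenous: each agent ranks items solely according to her own weights $w(i,\cdot)$, independently of the items chosen by others. Both conditions match the premises of Theorem~\ref{thm:as-for-F} verbatim, so the theorem applies to the collection $A$ constructed above.

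The one point that deserves explicit checking is the efficiency assumption underlying Algorithm~\ref{alg:feasible-collection-action-sequence}, namely polynomial-time access to the best-response oracle $\BR(i,M)$. Here this is immediate: given the set $M$ of already-fixed agent-item pairs, $\BR(i,M)$ is simply the item of smallest rank $r_i$ among those not yet taken in $M$, which can be found by a single scan over $i$'s ranking (the ranks are determined by the given instance and, if not supplied directly, can be recovered from the weights $w$). Consequently, running Algorithm~\ref{alg:feasible-collection-action-sequence} on input $(F, A, \{r_i\}_{i\in[n]})$ decides in polynomial time whether $A$ --- equivalently $\mu$ --- is producible by an action sequence, which is exactly the claim. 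I do not anticipate a genuine obstacle: the only care needed is to make the matching-to-collection correspondence precise and to confirm that the $\BR$ oracle is efficiently realizable in this domain, both of which are routine.
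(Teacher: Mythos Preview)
Your proposal is correct and is exactly the approach the paper takes: the paper simply states that the corollary follows from Theorem~\ref{thm:as-for-F}, and you have spelled out the routine verification that \osm\ satisfies its hypotheses (downward-closed matching constraint, endogenous rankings, and a trivially efficient $\BR$ oracle). There is nothing to add.
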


We now use Algorithm~\ref{alg:feasible-collection-action-sequence} together with Corollary~\ref{cor:as-for-matching} to rediscover a characterization of Abdulkadiro\u{g}lu and S\"onmez~\cite{AS98} (see also~\cite{ACM+05}) about matchings that can be computed by serial dictatorships. To do so, we need to adapt Pareto-optimality to our case. 

\begin{definition}
Given a complete bipartite graph $G=K_{n,n}$ and rank information $\{r_i\}_{i\in [n]}$ representing strict preferences of agent $i\in [n]$ for $n$ items, a perfect matching $\mu$ is Pareto-optimal if there is no other perfect matching $\mu'$ such that $r_i(\mu'_i)\leq r_i(\mu_i)$ for all $i\in [n]$ and $r_i(\mu'_{i^*})< r_i(\mu_{i^*})$ for some agent $i^*\in [n]$.
\end{definition}

\begin{theorem}[Abdulkadiro\u{g}lu and S\"onmez~\cite{AS98}]\label{thm:pareto}
Given an instance of \osm\ consisting of a complete bipartite graph $G=K_{n,n}$ and rank information $\{r_i\}_{i\in [n]}$, a perfect matching can be produced by an action sequence if and only if it is Pareto-optimal.
\end{theorem}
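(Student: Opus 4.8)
The plan is to prove both directions of the equivalence, using Corollary~\ref{cor:as-for-matching} and Algorithm~\ref{alg:feasible-collection-action-sequence} as the main engine for the harder direction. Throughout, recall that for \osm\ the feasibility constraint (forming a matching) is downward closed and the rankings $r_i$ are endogenous, so all hypotheses of Theorem~\ref{thm:as-for-F} are satisfied.

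\paragraph{The easy direction (action sequence $\Rightarrow$ Pareto-optimal).} First I would show that any perfect matching $\mu$ produced by an action sequence $\pi$ is Pareto-optimal, arguing by contradiction. Suppose some perfect matching $\mu'$ weakly dominates $\mu$, i.e., $r_i(\mu'_i)\leq r_i(\mu_i)$ for all $i$ with strict inequality for some $i^*$. Consider the earliest agent $i$ in the order $\pi$ for whom $\mu'_i\not=\mu_i$; among all agents who strictly improve, this is the natural candidate to derive a contradiction. Since $i$ is earliest to deviate, every agent preceding $i$ in $\pi$ receives the same item under $\mu$ and $\mu'$, so the item $\mu'_i$ (which $i$ weakly prefers to $\mu_i$, hence $r_i(\mu'_i)\le r_i(\mu_i)$) was still available when agent $i$ acted in $\pi$. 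But agent $i$ always takes her highest-ranked available item, so she would have picked $\mu'_i$ rather than $\mu_i$ unless $\mu'_i = \mu_i$ — contradicting the choice of $i$. This shows no dominating $\mu'$ exists, so $\mu$ is Pareto-optimal.

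\paragraph{The hard direction (Pareto-optimal $\Rightarrow$ action sequence).} This is where I expect the main difficulty, and my plan is to run Algorithm~\ref{alg:feasible-collection-action-sequence} on the Pareto-optimal matching $\mu$ and show it never returns Fail. By the contrapositive of the algorithm's correctness (established in the proof of Theorem~\ref{thm:as-for-F}), it suffices to show that if $\mu$ is \emph{not} producible by any action sequence, then $\mu$ is not Pareto-optimal. So suppose the algorithm fails: at some iteration the set $\text{Top}=\{i\in P: \mu_i=\BR(i,M)\}$ is empty, meaning every remaining agent $i\in P$ strictly prefers some item currently free (unassigned in $M$) over her own item $\mu_i$. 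The key combinatorial step is to turn this ``everyone wants to move up'' situation into an improving exchange. Restrict attention to the agents in $P$ and the items they are matched to under $\mu$; each such agent $i$ points to her most-preferred currently-available item, which (since $\text{Top}=\emptyset$) she strictly prefers to $\mu_i$ and which is the $\mu$-item of some other agent in $P$ (it cannot be an already-assigned item in $M$, nor her own). Following these ``points-to'' arrows within the finite set $P$ must produce a directed cycle; rotating the items along this cycle yields a new perfect matching $\mu'$ in which every agent on the cycle strictly improves and everyone else is unchanged, contradicting Pareto-optimality of $\mu$.

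\paragraph{Main obstacle and how I would handle it.} The delicate point is verifying that the ``points-to'' arrows stay inside $P$ and that the resulting cyclic exchange genuinely preserves the matching structure (each item reassigned to exactly one agent). I would argue that a free item cannot be one already committed in $M$, because the agents in $M$ were placed exactly when they topped their available choices, so the items free at the failing iteration are precisely the $\mu$-items of agents still in $P$; thus each arrow from an agent in $P$ lands on the $\mu$-item of another agent in $P$, keeping the exchange graph internal to $P$. The cycle-rotation then defines a valid perfect matching because it permutes a subset of items among a subset of agents while the rest of $\mu$ is untouched. Making this bookkeeping precise — in particular, arguing that the free items available at the stuck step correspond bijectively to the agents remaining in $P$ — is the crux, after which Pareto-suboptimality of $\mu$ follows immediately from the strict improvement of every agent on the cycle.
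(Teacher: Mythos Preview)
Your proposal is correct and tracks the paper's argument closely. The hard direction is essentially identical to the paper's: when Algorithm~\ref{alg:feasible-collection-action-sequence} gets stuck with $\text{Top}=\emptyset$, the paper also builds an exchange structure on the agents in $P$ and their $\mu$-items (it phrases it as a bipartite graph with $2|P|$ nodes and $2|P|$ edges containing a cycle, while you phrase it as a functional digraph on $P$; these are equivalent), and rotates along the cycle to obtain a Pareto-dominating matching. Your bookkeeping that the free items at the stuck step are exactly $\{\mu_j:j\in P\}$ is the right observation and matches the paper's implicit use of this fact.

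The one place you diverge is the easy direction: you prove ``producible $\Rightarrow$ Pareto-optimal'' directly via the classic earliest-deviator argument, whereas the paper proves both implications through Algorithm~\ref{alg:feasible-collection-action-sequence}, showing in particular ``not Pareto-optimal $\Rightarrow$ algorithm fails'' by arguing that the agents in $P_C=\{i:\mu_i\neq\mu'_i\}$ can never enter $\text{Top}$. Your direct argument is a bit cleaner for that implication and avoids invoking the algorithm at all; the paper's version has the minor advantage of keeping the whole proof uniformly about the behaviour of Algorithm~\ref{alg:feasible-collection-action-sequence}. Either way the content is the same.
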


\begin{proof}
We prove that a perfect matching $\mu$ can be accepted by Algorithm~\ref{alg:feasible-collection-action-sequence} (by selecting $i^*$ from $\text{Top}$ appropriately) if and only if $\mu$ is Pareto-optimal. Recall that, now, Algorithm~\ref{alg:feasible-collection-action-sequence} takes  $(i,\mu_i)_{i\in [n]}$ as input.

Assume that Algorithm~\ref{alg:feasible-collection-action-sequence} fails for some matching $\mu$. This means that $\text{Top}$ is empty after the execution of line 3 in some iteration of the while loop. Hence, for every agent $i$ in set $P$, $\mu_i\not=\nu_i=\BR(i,M)$. Consider the subgraph consisting of the nodes corresponding to the agents in $P$ and the items they get in $\mu$ and the edges $(i,\mu_i)$ and $(i,\nu_i)$ for $i\in P$. As this subgraph has $2|P|$ nodes and edges, it has a cycle $C$, which for every edge $(i,\mu_i)$ it contains, it also has edge $(i,\nu_i)$. Now, we define the matching $\mu'$ as follows. We set $\mu'_i=\nu_i$ if node $i$ belongs to the cycle $C$ and $\mu'_i=\mu_i$ otherwise. By the definition of $\nu_i$, every agent $i$ of $C$ satisfies $r_i(\mu'_i)<r_i(\mu_i)$ while $r_i(\mu'_i)=r_i(\mu_i)$ for any agent $i\in [n]\setminus C$. Hence, $\mu'$ Pareto-dominates $\mu$, proving that $\mu$ is not Pareto-optimal.

Now, consider a matching $\mu$ that is not Pareto-optimal; we will show that the algorithm will fail. Let $\mu'$ be a matching that Pareto-dominates $\mu$. Let $P_C$ be the set of agents who get a different item in $\mu$ and $\mu'$. Also, denote by $R_C$ the set of items the agents of $P_C$ get in $\mu$ and $\mu'$. Consider an execution of the while loop in a step with $P_C\subseteq P$ (so, no action of the agents in $P_C$ has been checked yet). Then, for every agent $i\in P_C$, it holds $r_i(\mu'_i)<r_i(\mu_i)$ and, furthermore, as an item of $R_C$, item $\mu'_i$ is not among the actions that have been checked so far. Hence, $\BR(i,M)\not=\mu_i$. This means that no agent of $P_C$ can ever be included in set $\text{Top}$ and the algorithm will fail before set $P$ becomes empty.
\end{proof}

To answer Question 3, it suffices to use Theorem~\ref{thm:pareto} and observe that there is always a maximum-weight perfect matching that is Pareto-optimal. In this way, we obtain the following corollary.

\begin{corollary}
The price of serial dictatorship of maximum weight perfect matching in bipartite graphs is $1$.
\end{corollary}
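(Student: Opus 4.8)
The plan is to show that the price of serial dictatorship of maximum-weight perfect matching in bipartite graphs equals exactly $1$. Since the PoSD is by definition a number at least $1$, it suffices to prove that for every \osm\ instance, the optimal (maximum-weight) perfect matching can be realized by some action sequence; this forces the best action sequence to achieve social welfare equal to the optimal matching weight, giving a ratio of $1$. By Theorem~\ref{thm:pareto}, a perfect matching is producible by an action sequence if and only if it is Pareto-optimal. So the entire claim reduces to the following purely combinatorial fact: among all maximum-weight perfect matchings in $G=K_{n,n}$, at least one is Pareto-optimal with respect to the rank information $\{r_i\}_{i\in[n]}$.

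First I would fix a maximum-weight perfect matching $\mu$ and argue that it can be assumed Pareto-optimal without loss of generality. The key observation is that the rankings $r_i$ are \emph{consistent} with the weights: recall from the \osm\ definition that $r_i(j)<r_i(j')$ implies $w(i,j)\geq w(i,j')$. Hence, if $\mu'$ Pareto-dominates $\mu$, meaning $r_i(\mu'_i)\leq r_i(\mu_i)$ for all $i$ with strict inequality for some agent, then consistency gives $w(i,\mu'_i)\geq w(i,\mu_i)$ for every $i$, so $\mu'$ has weight at least that of $\mu$. In particular $\mu'$ is also a maximum-weight perfect matching. Thus Pareto-domination preserves (and cannot decrease) the total weight, and moving to a dominating matching keeps us inside the set of optimal matchings.

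Next I would make this into a clean existence argument. Starting from any maximum-weight perfect matching $\mu$, I repeatedly replace it by a matching that Pareto-dominates it whenever one exists. Each such step either strictly improves some agent's rank while weakly improving all others, so the process strictly decreases the tuple of ranks in the (strict) product/lexicographic order on the finitely many perfect matchings and therefore terminates. The terminal matching $\mu^*$ admits no Pareto-dominating matching, i.e., it is Pareto-optimal, and by the weight-monotonicity just noted it still has maximum weight. Applying Theorem~\ref{thm:pareto} to $\mu^*$, there is an action sequence $\pi$ producing $\mu^*$, and $\SW(\pi)$ equals the weight of $\mu^*$, which is the optimum value of \maxF\ (maximum-weight perfect matching). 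Hence the best action sequence attains the optimal combinatorial value, so $\mathrm{PoSD}=1$.

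I expect the main (and essentially only) subtlety to be ensuring that the consistency between rankings and weights is invoked correctly: the implication only goes one way ($r_i(j)<r_i(j')\Rightarrow w(i,j)\geq w(i,j')$), so a Pareto improvement is guaranteed to be weight non-decreasing, which is exactly what the argument needs. Everything else is routine — the termination of the domination-improvement process follows from finiteness of perfect matchings, and the translation from ``Pareto-optimal'' to ``producible by an action sequence'' is handed to us by Theorem~\ref{thm:pareto}. The cleanest write-up simply takes a maximum-weight matching, notes that any Pareto improvement stays maximum-weight, concludes a Pareto-optimal maximum-weight matching exists, and invokes Theorem~\ref{thm:pareto}.
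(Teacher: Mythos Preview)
Your proposal is correct and follows exactly the approach the paper takes: observe that there is always a maximum-weight perfect matching that is Pareto-optimal, then invoke Theorem~\ref{thm:pareto}. You have in fact supplied more detail than the paper does, carefully justifying the existence of a Pareto-optimal maximum-weight matching via the rank--weight consistency and a terminating improvement process, whereas the paper simply asserts this observation.
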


\section{Arborescences in Directed Graphs} \label{sec:arborescence}
We devote this section to \osa; we answer Question 1 using Algorithm~\ref{alg:arborescence-2-approx}.

{
	\begin{algorithm}[ht]
		{ 
			
			{\bf Input:} Query access to 
			an $n$-agent \osa\ instance 
			
			{\bf Output:} An action sequence $\pi$
			\caption{A $2$-approximation algorithm for \osa}	
	\label{alg:arborescence-2-approx}
			\begin{algorithmic}[1]
				\STATE Initialize  $\pi \leftarrow \emptyset$; $ B \leftarrow \emptyset$;
				\FOR{$i = 1$ to  $n$}
				\IF{$v_i(\pi) = v_i(\emptyset)$}
				\STATE Update $\pi \leftarrow \pi||i$;\\
				\ELSE
				\STATE $C \leftarrow \{i\} \cup \{ j \in \pi: v_i(\pi \setminus j) = v_i(\emptyset)\}$; \label{line7}\\
				\STATE $\ell \leftarrow \argmin_{t \in C} v_t(\emptyset)$;
				\STATE Update $\pi \leftarrow (\pi || i) \setminus \ell$; $B \leftarrow B || j$; 
				\ENDIF
				\ENDFOR \\
				\STATE Update $\pi \leftarrow \pi ||B$;
				\RETURN  $\pi$;
			\end{algorithmic}
		}
	\end{algorithm}
}	

Algorithm~\ref{alg:arborescence-2-approx} builds the action sequence $\pi$ gradually. It aims to put at the beginning of $\pi$ agents who get their maximum value (i.e., equal to $v_i(\emptyset)$) by their action. The actions of these agents form a directed forest in the underlying directed graph. To do so, the algorithm considers the agents one by one (in the for loop) and, when an agent does not get her maximum value (lines 6-8) because the corresponding edge would form a cycle in the current forest, it detects this cycle (through the queries to $v_i(\pi\setminus j)$ and $v_i(\emptyset)$ in line 6) and removes the agent of lowest value (i.e., the one who has drawn the edge of lowest weight) in it from $\pi$ (in lines 7 and 8). Agents who are removed during this process are put in a reserve action subsequence $B$ (line 8). When all agents have been considered, the final action sequence consists of $\pi$ followed by $B$ (line 11).

\begin{theorem} \label{theorem:2-arborescence}
	 On input $n$-agent \osa\ instances, Algorithm~\ref{alg:matching-2-approx} uses $O(n^2)$ queries and computes a $2$-approximate solution.
\end{theorem}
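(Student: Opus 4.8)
The plan is to combine two bounds: an upper bound $\SW(\widehat{\pi})\le \sum_{i\in[n]} v_i(\emptyset)$ on the optimal action sequence $\widehat{\pi}$, and a matching lower bound showing that the value collected by the ``main'' part of the sequence the algorithm builds is at least $\tfrac12\sum_{i\in[n]} v_i(\emptyset)$. Here $v_i(\emptyset)=w(i,\tau_i(\emptyset))$ is the weight of agent $i$'s favourite outgoing edge, which is the maximum value $i$ can ever obtain; hence $v_i(S)\le v_i(\emptyset)$ for every $S\in\Snoti$, and summing over the agents gives the upper bound immediately.

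First, the query bound. Each iteration of the for loop issues the query $v_i(\pi)$ and, in the else branch, one query $v_i(\pi\setminus j)$ for each of the at most $n$ agents $j$ already in $\pi$; the values $v_t(\emptyset)$ needed in line 7 were computed when each agent was first considered and can be cached. So each iteration costs $O(n)$ queries and the total is $O(n^2)$.

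Second, and this is where the real work lies, I would establish by induction the invariant that after each iteration the agents currently in $\pi$ draw their favourite edges, these edges form a directed forest, and each such agent $i$ receives value exactly $v_i(\emptyset)$. The if branch preserves this trivially. For the else branch I would first argue that the set $C$ computed in line 6 is precisely the agent set of the unique cycle created when $i$ adds its favourite edge to the current forest: since the main edges form a forest, removing an agent $j$ cannot block any other main agent, so $v_i(\pi\setminus j)=v_i(\emptyset)$ holds exactly when $j$'s edge lies on the path from $\tau_i(\emptyset)$ back to $i$, i.e.\ on that cycle. Deleting the minimum-value agent $\ell$ of $C$ breaks this cycle, so the remaining main edges together with $i$'s favourite edge are again acyclic, re-establishing the invariant. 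Because a forest stays acyclic under every prefix of any ordering of its edges, in the returned sequence $\pi||B$ each main agent still draws its favourite edge; hence $\SW(\pi||B)\ge \sum_{i\in\mathrm{main}} v_i(\emptyset)=:W_{\mathrm{main}}$, the reserve agents only adding non-negative value.

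Finally I would bound the weight $W_{\mathrm{reserve}}=\sum_{\ell\in B} v_\ell(\emptyset)$ lost to the reserve. Consider the functional graph $D$ of all $n$ favourite edges; each weakly connected component of $D$ contains exactly one cycle, and these cycles are pairwise vertex-disjoint. Exactly one agent is pushed to $B$ per cyclic component of $D$ — at the moment the component's cycle is completed — and it is the minimum-weight agent on that cycle. Since every cycle has at least two agents, its minimum weight is at most half its total weight; summing over the vertex-disjoint cycles gives $W_{\mathrm{reserve}}\le \tfrac12 W$, where $W=\sum_i v_i(\emptyset)$. Hence $W_{\mathrm{main}}=W-W_{\mathrm{reserve}}\ge \tfrac12 W\ge\tfrac12\SW(\widehat{\pi})$, and combining with the lower bound on $\SW(\pi||B)$ yields the claimed $2$-approximation. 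The main obstacle is the structural claim that the algorithm's online cycle-breaking removes exactly one (minimum-weight) edge from each of the vertex-disjoint cycles of $D$; once the forest invariant and this one-removal-per-component fact are in place, the two counting bounds are routine.
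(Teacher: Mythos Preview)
Your proposal is correct and follows essentially the same approach as the paper: both show that the ``main'' agents each secure $v_i(\emptyset)$, upper-bound the optimum by $\sum_i v_i(\emptyset)$, and then argue that the reserve loses at most half of this total. The only cosmetic difference is in the accounting step: you phrase it globally via the functional graph of favourite edges (vertex-disjoint cycles, one minimum-weight removal per cycle), whereas the paper phrases it as a local charging argument, mapping each removed agent $\ell$ injectively to another cycle member $f(\ell)\in P$ with $v_{f(\ell)}(\emptyset)\ge v_\ell(\emptyset)$; the two yield the identical inequality $W_{\mathrm{main}}\ge W_{\mathrm{reserve}}$.
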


\begin{proof}
The bound on the running time follows since there are $n$ executions of the for loop and each execution requires at most $O(n)$ queries. In particular, at most $n$ queries are enough to compute $C$ in line 6 and, similarly, at most $n$ queries are enough to compute $\ell$ in line 7.

To prove the bound on the approximation ratio, we will account for the contribution to the social welfare by the agents in action subsequence $\pi$ that is computed by the for loop (before the update at line 11). By the definition of the algorithm and due to monotonicity, each of these agents $i$ contributes $v_i(\emptyset)$. We use the term {\em top edge} to refer to the edge these agents draw.

Let $N$ be the set of agents that are put to the action subsequence $B$ and let $P$ the all other agents. For an agent $\ell\in N$, let $i$ be the agent considered in the for loop when $\ell$ was removed from $\pi$. This removal happened because the top edges of the agents already in $\pi$ form a cycle $C$ together with the top edge of agent $i$, which has minimum weight among all edges in $C$.

Let $f(\ell)$ be an arbitrary agent different than $\ell$ in $C$. By the definition of Algorithm~\ref{alg:arborescence-2-approx}, it holds that  $v_\ell(\emptyset)\leq v_{f(\ell)}{\emptyset}$. Furthermore, observe that $f(\ell)$ cannot appear in any cycle formed in later executions. This is due to the fact that the only nodes that can be reached from $f(\ell)$ through edges already drawn or edges that will be drawn in later executions are those in the path from $f(\ell)$ to $\ell$ (whose drawn edge was just removed). Hence, $f(\ell)\in P$ and, furthermore, $f(\ell)$ is a one-to-one function from $N$ to $P$. Thus,
\begin{align*}
    \sum_{i\in P}{v_i(\emptyset)} &\geq \sum_{\ell\in N}{v_{f(\ell)}(\emptyset)}\geq \sum_{i\in N}{v_i(\emptyset)}
\end{align*}
and
\begin{align*}
    \SW(\pi) &\geq \sum_{i\in P}{v_i(\pi^i)} =\sum_{i\in P}{v_i(\emptyset)} \geq \frac{1}{2}\sum_{i\in [n]}{v_i(\emptyset)}\geq \frac{1}{2}\cdot \SW(\widehat{\pi}),
\end{align*}
where $\widehat{\pi}$ denotes the optimal action sequence.
\end{proof}

Like \osm, the answer to Question 2 for \osa\ follows by Theorem~\ref{thm:as-for-F}; recall that the feasibility constraint of a directed forest is downward closed and valuations are endogenous.

\begin{corollary}\label{cor:as-for-arborescence}
Given an instance of \osa\ and an arborescence $\tau$ in the underlying directed graph $G$, deciding whether there exists an action sequence $\pi$ that produces $\tau$ can be done in polynomial time. 
\end{corollary}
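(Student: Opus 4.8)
The plan is to obtain this as a direct instantiation of Theorem~\ref{thm:as-for-F}, by checking that \osa\ fits the abstract \osf\ framework with a downward-closed feasibility constraint and endogenous valuations. For the first property, recall that in \osa\ the constraint $F$ demands that the set of drawn edges form a directed forest in $G$; since deleting edges from a forest again yields a forest, every subset of a feasible collection is feasible, so $F$ is downward closed. For the second, each agent $i$ ranks her outgoing edges $(i,j)$ according to the fixed weights $w(i,j)$ through the ranking $r_i$, and this ranking is independent of the edges drawn by other agents, so the valuations are endogenous. With both hypotheses in place, Algorithm~\ref{alg:feasible-collection-action-sequence} and the correctness argument of Theorem~\ref{thm:as-for-F} apply, and the only remaining task is to cast a given arborescence $\tau$ as a full feasible collection of actions in the sense the theorem requires.

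The one genuine subtlety, which I expect to be the main obstacle, is that an arborescence is \emph{not} literally a full collection: its root $r$ (the unique out-degree-$0$ node) draws no edge, so $\{(i,\tau_i)\}$ only assigns actions to the $n-1$ non-root agents. I would resolve this by augmenting each agent's action set $X_i$ with a null action $\bot_i$ that is always feasible (it adds no edge, so the forest constraint is preserved) and is ranked strictly last with value $0$. This augmentation keeps $F$ downward closed and the valuations endogenous, and under it $\tau$ becomes a genuine full feasible collection: every non-root agent takes her $\tau$-edge and the root takes $\bot_r$. The crucial check is that the endogenous best response satisfies $\BR(i,A)=\bot_i$ exactly when every real outgoing edge of $i$ is forbidden given the current forest $A$, which faithfully mirrors the \osa\ dynamics where an agent draws no edge only when all her edges close a cycle. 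In particular $\BR(r,A)=\bot_r$ forces all of $r$'s outgoing edges $(r,j)$ to be forbidden, i.e.\ each target $j$ already reaches $r$ through edges in $A\subseteq\tau$; since $\tau$ is rooted at $r$, this happens for every $j$ only once all $n-1$ other edges have been drawn, so the construction automatically forces the root to act last, as it must in any producing sequence.

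With the full feasible collection in hand, Theorem~\ref{thm:as-for-F} decides in polynomial time whether there exists an action sequence producing it, and by the forced-root analysis this is equivalent to deciding whether $\tau$ can be produced by an \osa\ action sequence. The remaining verifications are routine: $\BR(i,\cdot)$ is computable in polynomial time, since one simply scans $i$'s outgoing edges in rank order and returns the first one that does not close a cycle in the current forest (falling back to $\bot_i$ if none qualifies), and the null action can never pre-empt a feasible real edge because it is ranked last. This yields the claimed polynomial-time decision procedure and completes the corollary.
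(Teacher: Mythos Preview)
Your proposal is correct and follows the paper's own approach: the paper derives the corollary directly from Theorem~\ref{thm:as-for-F} by noting that the directed-forest constraint is downward closed and that \osa\ valuations are endogenous, exactly as you do. Your treatment is in fact more careful than the paper's, which glosses over the point that an arborescence assigns no action to the root; your null-action augmentation cleanly patches this so that Theorem~\ref{thm:as-for-F} applies verbatim to a \emph{full} collection, and your observation that $\BR(r,M)=\bot_r$ forces the root to act last is the right consistency check.
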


We now characterize the arborescences that can be produced by action sequences using, again, the notion of Pareto-optimality.

\begin{definition}
Given a complete directed graph $G=([n],E)$ and rank information $\{r_i\}_{i\in [n]}$ representing strict preferences of agent $i\in [n]$ for the edges incident to node $i$, an arborescence $\tau$ is Pareto-optimal if there is no other arborescencce $\tau'$ such that $r_i(\tau'_i)\leq r_i(\tau_i)$ for all $i\in [n]$ and $r_i(\tau'_{i^*})< r_i(\tau_{i^*})$ for some agent $i^*\in [n]$.
\end{definition}

\begin{theorem}\label{thm:pareto-arb}
Given an instance of \osa\ consisting of a directed graph $G=([n],E)$ and rank information $\{r_i\}_{i\in [n]}$, an arborescence $\tau$ can be produced by an action sequence if and only if $\tau$ is Pareto-optimal.
\end{theorem}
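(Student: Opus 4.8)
The plan is to mirror the proof of Theorem~\ref{thm:pareto} for bipartite matchings. By Corollary~\ref{cor:as-for-arborescence} (which is just Theorem~\ref{thm:as-for-F} specialized to the downward-closed, endogenous forest constraint), an arborescence $\tau$ is produced by some action sequence if and only if Algorithm~\ref{alg:feasible-collection-action-sequence}, run on input $(i,\tau_i)_{i\in[n]}$, succeeds. So it suffices to show that this algorithm accepts $\tau$ if and only if $\tau$ is Pareto-optimal. Throughout I would track the state of the while loop: at any iteration, $M$ is the set of already-committed edges (a subforest of $\tau$) and $P$ is the set of uncommitted agents. The one structural fact I would use repeatedly is that, since $M\cup\{(i,\tau_i)\}\subseteq\tau$ is acyclic, the edge $(i,\tau_i)$ is always an available action for $i\in P$; hence if $\tau_i\neq\BR(i,M)$ then $\BR(i,M)$ is strictly preferred, i.e.\ $r_i(\BR(i,M))<r_i(\tau_i)$.

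For the direction ``$\tau$ not Pareto-optimal $\Rightarrow$ the algorithm fails,'' I would argue exactly as in the matching case. Let $\tau'$ Pareto-dominate $\tau$, let $P_C=\{i:\tau'_i\neq\tau_i\}\neq\emptyset$, and note $r_i(\tau'_i)<r_i(\tau_i)$ for $i\in P_C$ (strict because the rankings are strict). Consider any iteration with $P_C\subseteq P$. Then every committed agent lies outside $P_C$, so it draws the same edge in $\tau$ and $\tau'$; consequently $M\subseteq\tau'$. Since $(i,\tau'_i)\in\tau'$, the set $M\cup\{(i,\tau'_i)\}$ is a subforest of $\tau'$ and is therefore feasible, so $\tau'_i$ is available to $i$ and $r_i(\BR(i,M))\leq r_i(\tau'_i)<r_i(\tau_i)$. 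Hence $\BR(i,M)\neq\tau_i$ and no agent of $P_C$ ever enters $\text{Top}$ while it is still uncommitted; as this holds for all of $P_C$ simultaneously, no agent of $P_C$ can ever be committed, $P$ never empties, and the algorithm returns Fail.

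The reverse direction ``the algorithm fails $\Rightarrow$ $\tau$ not Pareto-optimal'' is where I expect the real difficulty, and where the arborescence case departs from matchings. In the matching proof one finds an alternating cycle in the union of the chosen and best-response edges and swaps along it; the analogous move here—simultaneously rerouting every uncommitted agent $i$ to its best response $\BR(i,M)$—can introduce a directed cycle and thus fail to produce an arborescence. The key idea I would use to sidestep this is to reroute a single, carefully chosen agent. At a failed iteration pick $u\in P$ of maximum depth in $\tau$ (a deepest uncommitted agent); then every proper $\tau$-descendant of $u$ is committed, so the entire subtree of $\tau$ below $u$ lies in $M$. Writing $\nu_u:=\BR(u,M)\neq\tau_u$ with $r_u(\nu_u)<r_u(\tau_u)$, I claim $\nu_u$ is not a $\tau$-descendant of $u$: otherwise the path from $\nu_u$ up to $u$ would lie entirely in $M$ and $(u,\nu_u)$ would close a cycle, contradicting feasibility of $\BR(u,M)$. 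Because $\nu_u$ lies outside $u$'s subtree, rerouting only $u$'s edge from $\tau_u$ to $\nu_u$ cannot create a cycle, so it yields a genuine arborescence $\tau'$ that agrees with $\tau$ everywhere except at $u$, where it is strictly better. Thus $\tau'$ Pareto-dominates $\tau$.

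Finally I would dispatch the routine root bookkeeping: the agent $u$ is never the root $r$ of $\tau$ (the root has no outgoing edge, depth $0$, and would always satisfy $\tau_r=\BR(r,M)$, so $P=\{r\}$ is impossible at a failure, and a maximum-depth element of a larger $P$ has positive depth), so rerouting $u$ keeps the root and the spanning-arborescence structure intact. Combining the two directions shows the algorithm accepts $\tau$ exactly when $\tau$ is Pareto-optimal, which together with Corollary~\ref{cor:as-for-arborescence} proves Theorem~\ref{thm:pareto-arb}.
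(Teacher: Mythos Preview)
Your proposal is correct and follows essentially the same approach as the paper: both reduce to the success/failure of Algorithm~\ref{alg:feasible-collection-action-sequence} via Corollary~\ref{cor:as-for-arborescence}, and for the direction ``fail $\Rightarrow$ not Pareto-optimal'' both pick a deepest (equivalently, furthest-from-root) uncommitted agent $u$, argue that $\BR(u,M)$ cannot lie in $u$'s $\tau$-subtree, and reroute only $u$'s edge to obtain a Pareto-dominating arborescence. Your treatment of the converse direction is in fact slightly cleaner than the paper's: instead of singling out one deepest agent of $P_C$ and reusing the subtree argument, you observe directly that while $P_C\subseteq P$ the committed edge set $M$ is a subforest of $\tau'$, so every $\tau'_i$ with $i\in P_C$ is feasible and strictly preferred to $\tau_i$; this avoids the somewhat terse ``repeating the argument'' step in the paper. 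Your explicit root bookkeeping is also a useful addition that the paper leaves implicit.
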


\begin{proof}
Using Corollary~\ref{cor:as-for-arborescence}, it suffices to show that an arborescence $\tau$ can be accepted by Algorithm~\ref{alg:feasible-collection-action-sequence} if and only if it is Pareto-optimal.

Assume that Algorithm~\ref{alg:feasible-collection-action-sequence} fails for some arborescence $\tau$. This means that $\text{Top}$ is empty after the execution of line 3 in some iteration of the while loop. Hence, for every agent $i\in P$, it is $\tau_i\not=\BR(i,M)$. Consider a node $k\in P$ that is furthest from the root node in $\tau$. Let $\nu_k=\BR(k,M)$ and assume that $\nu_k$ belongs to the subtree of $k$ in $\tau$. Since $\nu_k=\BR(k,M)$, $M\cup \{(k,\nu_k)\}$ is feasible. Hence, some edge $(k',\tau_{k'})$ in the path from $\nu_k$ to $k$ has not been checked yet by the algorithm. Therefore, $k'$ should belong to $P$. This contradicts the assumption that $k$ is a node that is furthest from the root in $\tau$. Hence, $\nu_k$ does not belong to the subtree of $k$ in $\tau$. By replacing the edge $(k,\mu_k)$ by $(k,\nu_k)$ in $\tau$, we obtain an arborescence that Pareto-dominates $\tau$ (recall that $r_k(\nu_k)<r_k(\tau_k)$).

Now, consider an arborescence $\tau$ that is not Pareto-optimal; we will show that Algorithm~\ref{alg:feasible-collection-action-sequence} fails for it. Let $\tau'$ be an arborescence that Pareto-dominates $\tau$ and let $k$ be an agent who uses different actions in $\tau$ and $\tau'$ and is furthest from the root in $\tau$. Consider the execution of the while loop in a step with $k\in P$. Repeating the argument in the previous paragraph, we obtain that $\tau'_k$ cannot be in the subtree of $k$ in $\tau$. Hence, the collection of actions $M\cup \{(k,\tau'_k)\}$ is feasible and, furthermore, $r_k(\tau'_k)<r_k(\tau_k)$. Thus, $\BR(k,M)\not=\tau_k$ and agent $k$ can never be included in $\text{Top}$ before Algorithm~\ref{alg:feasible-collection-action-sequence} fails.
\end{proof}

As there is always a maximum-weight arborescence that is Pareto-optimal, we obtain the following corollary.

\begin{corollary} \label{cor:PoSD-arb}
The price of serial dictatorship of maximum weight arborescence in complete graphs is $1$.
\end{corollary}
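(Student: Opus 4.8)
The plan is to combine Theorem~\ref{thm:pareto-arb} with the observation that Pareto-optimality is never an obstacle for weight maximization. Since the price of serial dictatorship is always at least $1$ by definition, it suffices to exhibit, for every \osa\ instance, a single action sequence whose social welfare equals the optimal value of the underlying \maxF\ problem (i.e., the maximum-weight arborescence). By Theorem~\ref{thm:pareto-arb}, an arborescence is producible by an action sequence precisely when it is Pareto-optimal, so the entire argument reduces to the following claim: there exists a maximum-weight arborescence that is Pareto-optimal.

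To establish this claim, I would start from any maximum-weight arborescence $\tau^*$ and argue that it can be replaced by a Pareto-optimal one of the same total weight. The key step is to show that Pareto-improvements never decrease weight. Concretely, suppose $\tau^*$ is not Pareto-optimal, witnessed by some arborescence $\tau'$ with $r_i(\tau'_i)\leq r_i(\tau^*_i)$ for all $i$ and strict inequality for at least one agent. Because each agent's ranking $r_i$ is consistent with her weights---recall from the definition of \osa\ that $r_i(j)<r_i(j')$ implies $w(i,j)\geq w(i,j')$---every agent weakly prefers her edge in $\tau'$ to her edge in $\tau^*$ in terms of weight, so $\SW_w(\tau')\geq \SW_w(\tau^*)$. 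Since $\tau^*$ already has maximum weight, $\tau'$ is also maximum-weight. I would then iterate this Pareto-improvement step; as the multiset of ranks strictly decreases in each step (in a well-founded sense, since ranks are bounded positive integers), the process terminates at a Pareto-optimal arborescence $\tau$, which still has maximum weight.

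Having produced such a $\tau$, Theorem~\ref{thm:pareto-arb} guarantees an action sequence $\pi$ that produces it, and this $\pi$ achieves social welfare equal to the maximum arborescence weight. Hence for every instance the PoSD is exactly $1$, and taking the worst case over instances still gives $1$.

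The only subtle point---and the step I would be most careful about---is the termination/well-foundedness of the iterated Pareto-improvement argument, together with the fact that a single weight-preserving improvement exists whenever a matching/arborescence fails Pareto-optimality. One clean way to avoid iterating explicitly is to pick $\tau$ to be a maximum-weight arborescence that is \emph{lexicographically} best with respect to the sorted rank vector (or, equivalently, minimizes $\sum_i r_i(\tau_i)$ among all maximum-weight arborescences); such a $\tau$ must be Pareto-optimal, since any Pareto-dominating arborescence would have both at least as large weight and a strictly smaller rank sum, contradicting the choice of $\tau$. This reformulation sidesteps the termination concern entirely and is the route I would ultimately take.
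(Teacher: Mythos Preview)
Your proposal is correct and follows exactly the same route as the paper: invoke Theorem~\ref{thm:pareto-arb} and observe that some maximum-weight arborescence is Pareto-optimal. The paper states this observation without justification, whereas you supply two clean arguments for it (iterated Pareto-improvement with termination, or more slickly, selecting a maximum-weight arborescence minimizing $\sum_i r_i(\tau_i)$); your lexicographic/rank-sum argument is the tidier of the two and fills in a detail the paper leaves implicit.
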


\section{Satisfiability} \label{sec:sat}
\osm\ and \osa\ enjoy many similarities; this is evident from our answers to the three key questions. In contrast, the case of \oss\ is fundamentally different. 

\begin{theorem} \label{theorem:2-sat}
Any action sequence yields a $2$-approximate solution to \oss.
\end{theorem}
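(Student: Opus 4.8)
The plan is to establish the stronger statement that \emph{every} action sequence $S$ satisfies at least half of the total clause weight $W := \sum_{C_j \in \mathcal{C}} w(C_j)$. Since no assignment, and in particular no action sequence, can satisfy more than $W$, the optimal social welfare obeys $\OPT \le W$; hence $\SW(S) \ge W/2 \ge \OPT/2$ would immediately yield the desired $2$-approximation for \oss. The starting observation is that $\SW(S)$ equals exactly the total weight of clauses satisfied by the final assignment produced by $S$: each clause that ends up satisfied is counted once in the sum $\sum_{i} v_i(S^i)$, namely at the step of the agent whose action first satisfies it (the valuations $v_i(S^i)=\max\{P(i,C_{S^i}),N(i,C_{S^i})\}$ measure precisely the newly satisfied weight). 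Writing $W_{\mathrm{unsat}}$ for the total weight of clauses left unsatisfied, we therefore have $\SW(S) = W - W_{\mathrm{unsat}}$, so it suffices to prove $W_{\mathrm{unsat}} \le \SW(S)$.

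To bound $W_{\mathrm{unsat}}$, I would set up a charging scheme assigning each finally-unsatisfied clause to a single agent. For such a clause $C_j$ every literal is falsified by the final assignment; let $\mathrm{last}(j)$ be the agent controlling the variable of $C_j$ that acts \emph{last} among the variables occurring in $C_j$ (well defined, as all agents act). Charge $w(C_j)$ to $i = \mathrm{last}(j)$. The key claim is that, for each agent $i$, the weight charged to $i$ is at most $v_i(S^i)$. To see this, look at the moment $i$ acts, i.e., after the prefix $S^i$, and set $p_i = P(i, C_{S^i})$, $n_i = N(i, C_{S^i})$, so $v_i(S^i) = \max\{p_i, n_i\}$. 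Assume without loss of generality that $i$ sets $x_i$ to True, so $p_i \ge n_i$ and the still-unsatisfied clauses she fails to satisfy at this step are exactly those containing $\overline{x_i}$, of total weight $n_i$. Any clause $C_j$ charged to $i$ is still unsatisfied when $i$ acts, because $i$ is its last variable and $C_j$ is never satisfied; moreover its $x_i$-literal is false in the final assignment, and since $x_i$ is True this literal must be $\overline{x_i}$. Hence $C_j$ is counted in $n_i$, and the clauses charged to $i$ are disjoint members of the minority side, of total weight at most $n_i = \min\{p_i, n_i\} \le \max\{p_i, n_i\} = v_i(S^i)$.

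Summing this per-agent inequality over all agents gives $W_{\mathrm{unsat}} = \sum_i \big(\text{weight charged to } i\big) \le \sum_i v_i(S^i) = \SW(S)$, whence $\SW(S) = W - W_{\mathrm{unsat}} \ge W/2$ and the $2$-approximation follows. The step that needs the most care, and which I expect to be the crux, is the per-agent charging inequality: one must verify that \emph{all} clauses blamed on agent $i$ lie on the side she declines at her turn, so that they are accounted for within $\min\{p_i,n_i\}$ and never double-counted across agents. This is exactly what the choice of $\mathrm{last}(j)$ as the last-acting variable of $C_j$ buys us, together with the fact that $C_j$ is still unsatisfied at that moment. Note that the argument relies only on the local greedy choice each agent makes and never invokes monotonicity, which is consistent with the earlier observation that \oss\ valuations need not be monotone.
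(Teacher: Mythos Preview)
Your argument is correct. Both your proof and the paper's establish the same inequality $\SW(\pi)\ge \tfrac12\sum_{j}w(C_j)$, but the bookkeeping differs. The paper observes that $v_i(\pi^i)=\max\{P(i,C_{\pi^i}),N(i,C_{\pi^i})\}\ge\tfrac12\bigl(P(i,C_{\pi^i})+N(i,C_{\pi^i})\bigr)$ and that the sum $\sum_i\bigl(P(i,C_{\pi^i})+N(i,C_{\pi^i})\bigr)$ already dominates the total clause weight, since every clause is counted at the \emph{first} of its variables to act; two lines then finish the proof. You instead charge each \emph{unsatisfied} clause to the \emph{last} of its variables to act, and bound the charge to agent $i$ by $\min\{p_i,n_i\}\le v_i(\pi^i)$, obtaining $W_{\mathrm{unsat}}\le\SW(\pi)$. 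The two routes are dual to each other: the paper's is shorter and avoids case analysis, while your charging scheme makes explicit the slightly stronger per-agent inequality that the weight permanently lost at step $i$ never exceeds the weight gained there.
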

\begin{proof}
Consider an action sequence $\pi$. Then, the quantity $P(i,C_{\pi^i})+N(i,C_{\pi^i})$ is the total weight of the clauses in which variable $x_i$ appears and which are unsatisfied when it is the turn of agent $i$ to act. Hence, $\sum_{i\in [n]}{\left(P(i,C_{\pi^i})+N(i,C_{\pi^i})\right)}$ is an upper bound on the total weight in all clauses. In addition, the value agent $i$ gets is clearly $v_i(\pi^i)\geq \frac{1}{2}\left(P(i,C_{\pi^i})+N(i,C_{\pi^i})\right)$. Thus,
\begin{align*}
\SW(\pi) &= \sum_{i \in [n]}{v_i(\pi^i)} \geq \frac{1}{2} \sum_{i \in [n]}{\left(P(i,C_{\pi^i})+N(i,C_{\pi^i})\right)} \geq \frac{1}{2} \sum_{j \in [m]}{w(C_j)}\geq \frac{1}{2}\SW(\widehat{\pi}), 
\end{align*}
where $\widehat{\pi}$ denotes the optimal action sequence for the \oss\ instance.
\end{proof}

Given Theorem~\ref{theorem:2-sat}, the obvious next step would be to design more sophisticated algorithms for achieving a better-than-$2$ approximation ratio with polynomially many queries. However, we believe that there are important obstacles in doing so for \oss, summarized in the next question:

\smallskip

\begin{quote}\begin{mdframed}
{\bf Open Question 2:} Is there an exponential lower bound on the query complexity of (approximating) \oss?
\end{mdframed}\end{quote}

\smallskip

Next, we denote by \satas\  the problem of deciding whether a given Boolean assignment for the underlying satisfiability instance of \oss\ can be produced by an action sequence. In contrast to the easiness of deciding whether a given perfect matching or arborescence can be produced by an action sequence, \satas\ turns out to be intractable.

\begin{theorem}
	\satas\ is NP-hard.
\end{theorem}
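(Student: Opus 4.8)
The plan is to establish NP-hardness by a reduction from a suitable NP-hard problem, most naturally a variant of \textsc{3-Sat} or \textsc{Exact-Cover}. The key structural fact to exploit is the following tension in \oss: whether a target assignment $\mathbf{b}=(b_1,\dots,b_n)$ can be produced depends delicately on the \emph{order} in which agents act, because each agent myopically maximizes the weight of \emph{newly} satisfied clauses given the clauses still unsatisfied at her turn. Thus agent $i$ acts consistently with $b_i$ only if, at the moment she acts, the quantity $P(i,C_S)$ compared to $N(i,C_S)$ points toward $b_i$. This is a genuinely order-sensitive, non-monotone condition (recall the lemma showing \oss\ valuations are not monotone), so unlike the matching and arborescence cases there is no clean Pareto-optimality characterization to lean on, and a greedy acceptance test like Algorithm~\ref{alg:feasible-collection-action-sequence} need not work.

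First I would design gadgets that encode a Boolean formula so that a satisfying assignment corresponds exactly to an ordering realizing the target assignment. The core idea is to build clause-weights so that whether agent $i$ correctly prefers $b_i$ is \emph{contingent} on certain other agents having already acted (thereby removing some clauses from $C_S$ and tipping the $P$ versus $N$ balance). Concretely, for each variable $x_i$ I would introduce a small collection of weighted clauses whose net effect is that $x_i$ can be set to $b_i$ only after a designated ``enabling'' set of agents has moved; by chaining these dependencies I can force the action sequence to respect a combinatorial constraint that mirrors the clauses of the source instance. The precise weights must be chosen so that at the tie-breaking threshold $P(i,C_S)=N(i,C_S)$ the predefined tie-break does not accidentally permit the wrong action, and so that the dependencies cannot be satisfied unless the underlying formula is satisfiable.

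The two directions of the reduction then run as follows. For soundness, given an action sequence producing the target assignment, I would read off from the order in which enabling agents precede dependent agents a consistent truth assignment (or cover, etc.) for the source instance, arguing that the myopic-choice condition at each agent's turn translates into the source constraint being met. For completeness, given a solution to the source instance, I would construct an explicit ordering --- placing enabling agents before the agents they enable --- and verify by induction on the prefix that at every step the acting agent's preferred action under $\max\{P,N\}$ is exactly her target value, so the sequence indeed produces $\mathbf{b}$.

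The main obstacle I anticipate is controlling the \emph{interference} between gadgets: because an agent's decision depends on the \emph{entire} set of currently-unsatisfied clauses, clauses introduced for one variable can perturb the $P$-versus-$N$ comparison for another variable in unintended ways. Keeping the gadgets ``isolated'' --- so that the only cross-gadget coupling is the intended dependency --- will require a careful weighting scheme (for instance, giving each gadget its own weight scale, or using fresh auxiliary variables/clauses that decouple the arithmetic). A secondary subtlety is the fixed, predefined tie-breaking rule for the case $P(i,C_S)=N(i,C_S)$: the reduction must be robust to whatever tie-break is used, which I would handle by designing the weights so that the decisive comparisons are strict (bounded away from ties) whenever the intended order is followed, while a wrong order forces a strict preference for the opposite action.
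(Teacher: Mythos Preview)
Your high-level plan is on the right track and in fact matches the paper's route: the paper also reduces from \textsc{Exact-3-Cover}. But as written your proposal is not yet a proof --- it is a description of the \emph{shape} a proof should have, with the entire difficulty deferred to the unspecified ``gadgets'' and ``careful weighting scheme.'' The hardness of \satas\ lives precisely in that construction, and nothing you have written constrains it enough to see that it can be made to work.

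To be concrete about what is missing and how the paper fills it: the paper fixes the target assignment to be all-True, introduces one variable per element, one variable per set, and a single auxiliary variable $Q$, and uses five clause types whose weights are calibrated so that (i) $Q$ can be set to True only after every element agent has acted; (ii) an element agent $x$ can be set to True only after at least one set agent $T\ni x$ has acted; and (iii) at most $q$ set agents can act before $Q$ (after $q$ set agents have gone, any further set agent would strictly prefer False). These three local constraints force any realizing sequence to begin with at most $q$ set agents that together cover all elements --- i.e., an exact cover --- and conversely any exact cover yields a valid ordering. The ``interference'' issue you flag is handled not by separating weight scales but by making every cross-gadget interaction part of the intended accounting (the type-A clauses among set variables and the type-D clauses linking $Q$ to set variables are exactly what enforces the ``at most $q$'' bound), and ties are avoided by sprinkling $\pm 1/3$ offsets so all decisive comparisons are strict.

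So the gap is not that your approach is wrong, but that the proposal stops before the only nontrivial step. If you want to turn it into a proof, you need to commit to a source problem, write down the clauses and weights, and then prove the analogues of the three structural lemmas above.
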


\begin{proof}
We prove the theorem by presenting a polynomial-time reduction from {\sc Exact-3Cover} (X3C). An instance of X3C consists of a universe $\mathcal{U}$ of $3q$ elements and a collection $\mathcal{T}$ of $t$ sets that each contains exactly three elements from $\mathcal{U}$. The problem of deciding whether there exists an {\em exact 3-cover}, i.e., a subcollection of $q$ sets from $\mathcal{T}$ that includes all elements of $\mathcal{U}$, is a well-known NP-hard problem~\cite{GJ79}.

For an element $x \in \mathcal{U}$, we write $f(x) = |T \in \mathcal{T}: x \in T|$ to denote the frequency of $x$ in collection $\mathcal{T}$. We construct a satisfiability instance which contains one {\em element variable} for each element in $\mathcal{U}$, one {\em set variable} for each set in $\mathcal{T}$, and one auxiliary variable $Q$. Additionally, there are five types of clauses, each of which contains at most two literals:
	\begin{itemize}
		\item Type A: Clause $(T_i \vee T_j)$ of weight $1$ for each pair of distinct sets $T_i, T_j \in \mathcal{T}$. There are a total of $t(t-1)/2$ such clauses.
		\item Type B: Clauses $(T \vee \overline{x}), (T \vee \overline{y}),$ and $(T \vee \overline{z})$, each of weight $1$, for each set $T = \{x,y,z\} \in \mathcal{T}$. There are a total of $3t$ such clauses.
		\item Type C: Clause $(x \vee \overline{Q})$  of weight $f(x)-1/3$ for each element $x \in \mathcal{U}$. There are a total of $3q$ such clauses.
		\item Type D: Clauses $(Q \vee \overline{T_1}), (Q \vee \overline{T_2}), \dots,  (Q \vee \overline{T_t})$, each of weight $t-q+7/3$. There are a total of $t$ such clauses.
		\item Type E: Clause (with singleton literal) $\overline{Q}$ of weight $t^2-qt+7t/3-1/3$.
	\end{itemize}
	Hence, the satisfiability instance consists of $n=3q+t+1$ variables and $m= t^2/2+7t/2+3q+1$ clauses. We now show that there is an action sequence which assigns the value of True to each variable of the satisfiability instance if and only if the X3C instance has an exact 3-cover. The next three lemmas state important properties an action sequence must have so that all variables are set to True. We use the terms {\em set agent} and {\em element agent} to refer to the agents who control the corresponding variables.
	
\begin{lemma} \label{lem:Q-element}
Agent $Q$ should act after all element agents.
\end{lemma}

\begin{proof}
Notice that variable $Q$ appears as positive literal in clauses of total weight $t^2 -qt +7t/3$. Denoting by $w_C$ the total weight of all type C clauses, we have that $Q$ appears as negative literal in clauses of total weight $t^2 -qt +7t/3-1/3+w_C$. Since $w_C \geq 2/3$, agent $Q$ can act and set its variable to True only after all Type C clauses are satisfied, which implies that all element agents have acted before (and set their variable to True).
\end{proof}

\begin{lemma}\label{lem:x-T}
Element agent $x$ should act after at least one set agent $T$ such that $x \in T$ has acted.
\end{lemma}

\begin{proof}
Element variable $x$ appears as positive and negative literal in clauses of total weight $f(x) -1/3$ and $f(x)$, respectively. As literal $\overline{x}$ appears only in type B clauses, the element agent $x$ can set variable $x$ to True only if at least one set agent $T$ such that $x \in T$ acted before (and set her variable to True).
\end{proof}

\begin{lemma}\label{lem:q-Q}
At most $q$ many set agents can act before agent $Q$.
\end{lemma}

\begin{proof}
Consider the time when $q$ set agents have acted; denote them by set $R$. Every set agent $T$ not belonging to $R$ appears as positive and negative literal in clauses of total weight $t-q+2$ and $t-q+7/3$, respectively, and would set their variable to False if they acted before agent $Q$ acts (and decreases the total weight of unsatisfied clauses that include $T$ as a negative literal).
\end{proof}

Lemmas~\ref{lem:Q-element},~\ref{lem:x-T}, and~\ref{lem:q-Q} establish that the action sequence that assigns each variable to True must have the following form: it starts with a set $R$ of at most $q$ set agents
corresponding to an exact 3-cover of the X3C instance, then come all the $3q$ element agents, followed by agent $Q$, and finally the remaining set agents. We can verify that such a structure works for any $R$ that form an exact 3-cover, completing the proof.
\end{proof}

Finally, we show that the price of serial dictatorship for maximum satisfiability is strictly higher than $1$.

\begin{theorem} \label{theorem: sat SD}
There exists an \oss\ instance in which  no action sequence produces better than $3/2$-approximate solution to the underlying satisfiability instance.
\end{theorem}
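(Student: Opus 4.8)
The plan is to exhibit a single small \oss\ instance where the optimal satisfying assignment achieves strictly more weight than any serial dictatorship can produce, with the ratio being at least $3/2$. Since the claim is an existence statement (a lower bound on the PoSD), I only need one concrete construction rather than a general argument. First I would look for the smallest possible configuration of clauses over two or three Boolean variables where the greedy, myopic nature of serial dictatorship provably falls short of the optimum. The key insight driving the construction is that an agent in a serial dictatorship acts to maximize only the \emph{newly satisfied} weight at her turn, ignoring how her choice interacts with clauses that later agents could satisfy; this is exactly the non-monotonicity exploited in the earlier \oss\ example.

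Concretely, I would design clauses so that whichever variable acts first is forced into a ``locally optimal but globally suboptimal'' assignment. A natural candidate: take two agents controlling $x_1, x_2$, and arrange clauses so that the unique optimal assignment satisfies total weight $W_{\OPT}$, while the first agent to act—seeing a tie or a misleadingly large immediate gain—commits to a value that blocks the second agent from reaching the optimum. The plan is to set up the positive/negative literal weights $P(i, \cdot)$ and $N(i, \cdot)$ so that the myopic best response of the first mover leads to a collective outcome of weight at most $\frac{2}{3} W_{\OPT}$, yielding the desired ratio of $3/2$. Because there are only two orderings to check when $n=2$ (and a manageable number for small $n$), I can then verify by exhaustive case analysis over all action sequences that none beats the $\frac{2}{3}$ fraction.

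The main steps, in order, are: (i) write down the explicit clause set with weights; (ii) compute the optimal satisfiability value $W_{\OPT}$ by inspecting all $2^n$ assignments; (iii) for each of the $n!$ action sequences, trace the serial dictatorship—at each agent's turn compute $P(i, C_S)$ and $N(i, C_S)$, apply the best-response rule (including the fixed tie-breaking convention), and accumulate the social welfare; and (iv) observe that the maximum social welfare over all sequences equals $\frac{2}{3} W_{\OPT}$, giving $\SW(\widehat\pi)/\max_\pi \SW(\pi) \geq 3/2$. I expect the main obstacle to be \emph{calibrating the weights} so that the tie-breaking rule cannot be exploited to recover the optimum: I must ensure that in \emph{every} ordering the first mover is genuinely misled, not merely in one ordering, and that no clever tie-break salvages the full weight. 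Getting the arithmetic right so the gap is exactly a factor of $3/2$ (rather than some weaker ratio) is the delicate part and may require a few iterations on the clause weights.
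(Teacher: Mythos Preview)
Your plan is correct and is exactly the approach the paper takes: exhibit one small explicit instance, compute the optimal satisfying weight, and then verify by brute force that every action sequence falls short by a factor approaching $3/2$. The paper's realization uses three variables rather than two, with the symmetric clause set $(x_1\vee\overline{x_2}\vee\overline{x_3})$, $(\overline{x_1}\vee x_2\vee\overline{x_3})$, $(\overline{x_1}\vee\overline{x_2}\vee x_3)$ each of weight $1$, together with singletons $(x_1),(x_2),(x_3)$ each of weight $1-\varepsilon$; the all-True assignment achieves $6-3\varepsilon$, while any serial dictatorship sets the first two acting variables to False and the last to True, yielding $4-\varepsilon$. Two remarks on your plan: first, your initial attempt with $n=2$ is unlikely to reach the $3/2$ ratio (whenever the first mover happens to agree with the optimum, the second mover will too), so you should expect to move to $n=3$; second, the paper does not achieve an exact $2/3$ fraction but rather $(4-\varepsilon)/(6-3\varepsilon)\to 2/3$, so your step~(iv) should allow for an $\varepsilon$-perturbation and a limiting argument rather than an exact equality.
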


\begin{proof}
	Let $\varepsilon>0$ be a negligibly small value. Consider the \oss\ instance defined by three Boolean variables $x_1, x_2, x_3$ and the following six clauses: $(x_1 \vee \overline{x_2} \vee \overline{x_3})$, $(\overline{x_1} \vee x_2 \vee \overline{x_3})$, $(\overline{x_1} \vee \overline{x_2} \vee x_3)$, each of weight $1$, and $(x_1)$, $(x_2)$, $(x_3)$, each of weight $1- \varepsilon$. It is easy to verify that any action sequence sets the first two variables to $0$ and the third one to $1$, producing an assignment that satisfies clauses of total weight at most $4-\varepsilon$ while the assignment $x_1=x_2=x_3=1$ has maximum total weight of $6-3\varepsilon$.
\end{proof}

Closing the gap between $3/2$ and $2$ is an important open problem.
\smallskip

\begin{quote}\begin{mdframed}
{\bf Open Question 3:} What is the tight bound on the price of serial dictatorship for maximum satisfiability?
\end{mdframed}\end{quote}

\smallskip

Better upper bound on the price of serial dictatorship for maximum satisfiability should exploit simple greedy algorithms. An algorithm we have studied considers the agents in monotone non-increasing order in terms of the total weight in singleton clauses containing their variable. Proving that this algorithm has an approximation ratio, say, $3/2$, would be enough to close the PoSD gap. Notice that this would match a known $3/2$ bound from~\cite{CFZ97} for a slightly more complicated greedy algorithm proposed by Johnson~\cite{Johnson74}. 

\section{Truthful Implementation} \label{sec:truth}
In this section, we explore truthful implementations of algorithms for \pro. A serial dictatorship, that uses a fixed action sequence, is the simplest truthful mechanism for \pro. As algorithms exploit queries to the agents in the computation of action sequences, the agents may have incentives to misreport their valuations and benefit from the output of the algorithm. Payments can be used in the design of sophisticated algorithms to ensure that responses to the queries are sincere.

In this section, we use the terms mechanism and algorithm interchangeably. In general, a mechanism for \pro\ takes as input (part of) the valuations of the agents and decides an action sequence {\em outcome} and {\em payments} that are imposed to the agents. Formally, we write $\bv=\{v_i\}_{i\in [n]}$ to denote the valuations of all agents, while $\bv_{-i}$ is the restriction of $\bv$ to the values of all agents besides $i$. For a mechanism, we typically use $M(\bv)$ to denote the action sequence computed on input (part of) the valuation vector $\bv$.\footnote{
Of course, we assume that any agent $i$ can compute her valuations $v_i(S)$ for any action subsequence $S$. For the specific valuations structures, this implies that agent $i$ can simulate the actions of agents appearing in $S$, and thus can decide the best available action for her. That is, we assume a complete information environment for the agents, as opposed to the mechanism which relies on queries to elicit information about valuations.} As usual, the notation $M^i(\bv)$ is used to denote the prefix of agent $i$ in the action sequence $M(\bv)$. Also, we denote by $p_i(\bv)$ the payment imposed to agent $i\in [n]$. Then, the quantity  $v_i(M^i(\bv))-p_i(\bv)$ is simply the utility of agent $i$ when agents report the valuation vector $\bv$.
Truthfulness requires that the mechanism returns the action sequence $M(\bv)$ and payments $p_i(\bv)$ for $i\in [n]$ so that no agent $i$ has an incentive to misreport any false valuation $v'_i$ instead of her true valuation $v_i$, for any valuations $\bv_{-i}$ by the other agents, i.e., 
\begin{align*}
    v_i(M^i(\bv))-p_i(\bv) &\geq v_i(M^i(\bv_{-i},v'_i))-p_i(\bv_{-i},v'_i).
\end{align*}
In other words, truth-telling is a utility-maximizing strategy for agent $i$, no matter what strategies the other agents use. To adapt this definition to randomized mechanisms (in which either the action sequence or the payment part are random), it suffices to use expectations of all terms.

In the following, we would like to explore whether the algorithms for \pro\ we have already designed are truthful for appropriately defined payments. In other words, we explore whether they have truthful implementations. A useful tool here might be a well-known {\em cycle monotonicity} characterization result of Rochet~\cite{R87}. However, the following much simpler argument is enough to prove negative results.

\begin{lemma}\label{lem:cycle-mon}
An algorithm $M$ has a truthful implementation only if for any valuation profile $\bv=(\bv_{-i},v_i)$ and an alternative valuation $v'_i$ that agent $i$ may report, the following is true:
\begin{align} \label{eq:truthful}
v_i(M^i(\bv))+v'_i(M^i({\bv_{-i}},v'_i)) \geq v_i(M^i({\bv_{-i}},v'_i))+v'_i(M^i({\bv})).
\end{align}
\end{lemma}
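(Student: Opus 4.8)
The plan is to obtain inequality~(\ref{eq:truthful}) directly from the definition of truthfulness, by instantiating the incentive constraint twice --- once in each ``direction'' --- and summing the two resulting inequalities so that the payment terms cancel. This is precisely the two-cycle (weak monotonicity) specialization of Rochet's cycle monotonicity condition, so the only real decision is which pair of deviations to combine.

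First I would fix an agent $i$ together with a valuation profile $\bv_{-i}$ for the remaining agents, and consider the two full profiles $\bv=(\bv_{-i},v_i)$ and, writing $\bv'=(\bv_{-i},v'_i)$, its counterpart in which $i$ reports $v'_i$. Assuming $M$ admits a truthful implementation, there are payments $p_i$ for which the truthfulness constraint holds for every agent, profile, and misreport. Applying it to agent $i$ when her true valuation is $v_i$ and the misreport is $v'_i$ gives
\[
v_i(M^i(\bv))-p_i(\bv)\ \geq\ v_i(M^i(\bv'))-p_i(\bv'),
\]
while applying it when her true valuation is $v'_i$ and the misreport is $v_i$ gives
\[
v'_i(M^i(\bv'))-p_i(\bv')\ \geq\ v'_i(M^i(\bv))-p_i(\bv).
\]

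Adding these two inequalities, the payment contribution $-p_i(\bv)-p_i(\bv')$ appears identically on both sides and cancels, leaving
\[
v_i(M^i(\bv))+v'_i(M^i(\bv'))\ \geq\ v_i(M^i(\bv'))+v'_i(M^i(\bv)),
\]
which is exactly (\ref{eq:truthful}) after substituting $\bv'=(\bv_{-i},v'_i)$. Since $i$, $\bv_{-i}$, $v_i$, and $v'_i$ were arbitrary, the condition holds for all profiles and all misreports.

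I do not anticipate a genuine obstacle here; the statement is a clean necessary condition and the argument is only a few lines. The one point to get right is that the payments are a function of the \emph{reported} profile only, so that the very same quantities $p_i(\bv)$ and $p_i(\bv')$ enter both incentive constraints and therefore cancel when the inequalities are summed. This payment-free form is what makes the lemma convenient for the subsequent negative results, since to rule out a truthful implementation of the \osm\ and \osa\ algorithms it then suffices to exhibit a single profile and misreport violating (\ref{eq:truthful}).
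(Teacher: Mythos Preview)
Your proposal is correct and follows essentially the same approach as the paper: fix agent $i$, apply the truthfulness constraint in both directions (with true type $v_i$ deviating to $v'_i$, and with true type $v'_i$ deviating to $v_i$), and sum so that the payment terms cancel. The paper's proof is slightly terser but otherwise identical in structure and content.
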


\begin{proof}
Fix an agent $i \in [n]$. By truthfulness, we know that an agent $i$ has no incentive to misreport her value from $v_i$ to $v'_i$ and vice versa. That is, assuming payment $p_i$, 
\begin{align*}
    &v_i(M^i(\bv))- p_i(\bv) \geq v_i(M^i(\bv_{-i},v'_i))-p_i(\bv_{-i}, v'_i) \\
    &v'_i(M^i(\bv_{-i},v'_i))- p_i(\bv_{-i}, v'_i) \geq v'_i(M^i(\bv))-p_i(\bv).
\end{align*}
Summing the inequalities above, we get the desired inequality.
\end{proof}

We can now use Lemma~\ref{lem:cycle-mon} to prove some non-implementability results.

\begin{theorem}
Algorithms~\ref{alg:matching-2-approx},~\ref{alg:arborescence-2-approx}, and {\sc Det} do not have truthful implementations.
\end{theorem}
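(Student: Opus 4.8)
The plan is to prove non-implementability by contradiction, using Lemma~\ref{lem:cycle-mon}: for each algorithm I will exhibit a single agent $i$, a valuation profile $\bv=(\bv_{-i},v_i)$, and an alternative report $v_i'$ that together violate inequality~(\ref{eq:truthful}). The strategy behind constructing such a profile is always the same: I want to find a situation where agent $i$, by \emph{misreporting} a valuation $v_i'$, causes the algorithm to place her earlier (or in a more favorable prefix position) so that she gains a strictly higher value than she would get when telling the truth, while the report $v_i'$ evaluated at the two outcomes does not compensate. Concretely I will aim for instances where $v_i(M^i(\bv_{-i},v_i')) > v_i(M^i(\bv))$ (misreporting helps agent $i$ under her true valuation) and simultaneously $v_i'(M^i(\bv_{-i},v_i')) \geq v_i'(M^i(\bv))$ (the misreported valuation also weakly prefers the outcome it induces), which forces the left side of~(\ref{eq:truthful}) to be strictly smaller than the right.

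For Algorithm~\ref{alg:matching-2-approx} (the greedy \osm\ algorithm), I would build a small bipartite instance, likely with $n=2$ or $n=3$ agents, where the greedy rule ``pick the agent of currently-maximum value'' orders agents in a way that blocks agent $i$ from her best item. Here the key lever is that raising or lowering reported values changes who is selected first in the for loop, which in turn changes which items remain available. I would arrange the true values so that under truthful reporting agent $i$ is selected late and receives a low-value item, but by reporting a larger value for some item she gets selected first and secures a better item, while her reported valuation still prefers the manipulated outcome. For Algorithm~\ref{alg:arborescence-2-approx} the analogous construction exploits the cycle-detection and removal step: I would design a directed-graph instance where telling the truth lands agent $i$ in the reserve subsequence $B$ (because her edge is the minimum-weight edge on a detected cycle), but a misreport makes a \emph{different} agent's edge the minimum, so agent $i$ stays in the forest prefix and draws her preferred outgoing edge. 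For {\sc Det}, which selects the best prefix over $c$-sets, I would use a profile where agent $i$'s value strongly influences which $c$-set is chosen as the optimal prefix; misreporting a higher value for the prefix position she occupies shifts {\sc Det}'s maximizing choice to a $c$-set that contains her in a better slot.

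The main obstacle I anticipate is the second clause of the violation, namely ensuring $v_i'(M^i(\bv_{-i},v_i')) \geq v_i'(M^i(\bv))$ holds \emph{simultaneously} with the strict gain under the true valuation, because the same report $v_i'$ must be consistent (a legitimate valuation of the claimed structural type — a ranking $r_i$ for \osm\ and \osa) and must also evaluate its own induced outcome favorably. For the combinatorial cases this means the misreported ranking must be a genuine strict ranking of items or edges, and I must check that the algorithm, run on the misreported instance, actually produces the outcome I want; this requires carefully tracing the greedy or cycle-removal dynamics on the concrete instance rather than hand-waving. I expect the cleanest path is to make $v_i'$ assign an inflated value to the exact item/edge that agent $i$ truly wants, so that both her true and reported valuations agree on preferring the manipulated outcome, trivializing the second inequality and leaving only the first (strict) inequality to verify by direct computation on the small instance.
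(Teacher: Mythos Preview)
Your high-level plan---invoke Lemma~\ref{lem:cycle-mon} and exhibit, for each algorithm, a pair $(v_i,v_i')$ violating inequality~(\ref{eq:truthful})---is exactly what the paper does. But the concrete strategy you lay out rests on a false implication. You assert that the two conditions $v_i(M^i(\bv_{-i},v_i'))>v_i(M^i(\bv))$ and $v_i'(M^i(\bv_{-i},v_i'))\ge v_i'(M^i(\bv))$ ``force the left side of~(\ref{eq:truthful}) to be strictly smaller than the right.'' They do not. Rearranging~(\ref{eq:truthful}) gives
\[
v_i'(M^i(\bv_{-i},v_i'))-v_i'(M^i(\bv))\;\ge\;v_i(M^i(\bv_{-i},v_i'))-v_i(M^i(\bv)),
\]
so a violation requires the gain of $v_i'$ from the switch to be \emph{strictly smaller} than the gain of $v_i$. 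Your two conditions merely say both gains are non-negative, and your proposed construction---inflating $v_i'$ on precisely the item/edge agent~$i$ truly wants---drives $v_i'$'s gain \emph{up}, the wrong direction. A sanity check on Algorithm~\ref{alg:matching-2-approx}: with $n=2$, true weights $w(1,1)=10$, $w(1,2)=9$, $w(2,1)=11$, $w(2,2)=0$, and misreport $w'(1,1)=12$, $w'(1,2)=9$, the misreport moves agent~1 first and secures item~1 (true value $10>9$), and $v_1'$ also prefers that outcome; yet~(\ref{eq:truthful}) holds since $9+12\ge 10+9$. The existence of a profitable manipulation \emph{without} payments is simply not the same as a cycle-monotonicity violation.

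The paper's constructions for Algorithm~\ref{alg:matching-2-approx} and {\sc Det} go the opposite way: $v_i'$ is a \emph{deflated} report under which the algorithm gives agent~$i$ a \emph{worse} prefix, but $v_i'$ loses \emph{more} from that downgrade than $v_i$ does (equivalently, the type that is assigned the good outcome cares about it less, in difference terms, than the other type). For Algorithm~\ref{alg:arborescence-2-approx} the paper's misreport does improve agent~1's position, but the essential trick is that it simultaneously inflates the \emph{fallback} edge in $v_1'$, so that $v_1'$'s gain from acting first is only $\varepsilon$ while $v_1$'s gain is roughly $1$; inflating only the top edge, as your last paragraph proposes, leaves~(\ref{eq:truthful}) intact.
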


\begin{proof}
We present instances of \osm, \osa, and \pro\ for which Algorithms~\ref{alg:matching-2-approx},~\ref{alg:arborescence-2-approx}, and {\sc Det}, respectively, do not satisfy the inequality~(\ref{eq:truthful}) in Lemma~\ref{lem:cycle-mon} and, thus, cannot be implemented truthfully.

Consider the \osm\ instance with two agents and two items and the following weights: $w(1,1)=1+\varepsilon$, $w(1,2) = 1$ defining the valuation $v_1$ for agent $1$, and $w(2,1)=1$, $w(2,2) = 1-\varepsilon$ defining the valuation $v_2$ for agent $1$. Algorithm~\ref{alg:matching-2-approx} (implicitly) matches agent $1$ to item $1$ and agent $2$ to item $2$. Now, consider another valuation profile $(v'_1,v_2)$ corresponding to the weights $w(1,1)=1-\varepsilon$, $w(1,2) = 0$, $w(2,1)=1$, $w(2,2) = 1-\varepsilon$, and observe that Algorithm~\ref{alg:matching-2-approx} matches agent $2$ to item $1$ and agent $1$ to item $2$. Denoting the outcome of Algorithm~\ref{alg:matching-2-approx} by $M$, we have $v_1(M^1(\bv))=1+\varepsilon$ and $v'_1(M^1(\bv_{-1},v'_1))=0$, while $v_1(M^1({\bf v_{-1}},v'_1)=1$ and $v'_1(M^1({\bf v}))=1-\varepsilon$. These values violate inequality~(\ref{eq:truthful}).

Next, consider the $4$-agent \osa\ instance with the following edge-weights: $w(1,2)=1-\varepsilon$, $w(1,4) =\varepsilon$, $w(2,1)=1$, $w(3,4)=1-\varepsilon$, and $w(4,3)=1$, with all the remaining edges having weight $0$. We denote the corresponding valuation profiles of the three agents by $v_1$, $v_2, v_3$ and $v_4$. Also, consider the following instance where the edge-weights are $w(1,2)=1+\varepsilon$, $w(1,4) =1$, $w(2,1)=1$, $w(3,4)=1-\varepsilon$, and $w(4,3)=1$, with all the remaining edges having weight $0$. We denote the corresponding valuation profiles of the three agents by $v'_1$, $v_2, v_3$ and $v_3$. We have $v_1(M^1(\bv))=0$ and $v'_1(M^1(\bv_{-1},v'_1))=1+\varepsilon$, while $v_1(M^1({\bf v_{-1}},v'_1))=1-\varepsilon$ and $v'_1(M^1({\bf v}))=1$, again violating inequality~(\ref{eq:truthful}).

Finally, consider the instance of \pro\ with $n$ agents having the following valuations: there is a set $C$ of $c$ agents that have $v_i(S)=10$ if $|S|<c$ and $v_i(S)=8$ otherwise. The rest of the agents have valuation $v_i(S)=9$ for any $S$. So, the action sequence $M({\bf v})$ has the agents in $C$ in the first $c$ positions, with the agents of $[n]\setminus C$ next. For some agent $j\in  C$, the valuations $v'_j$ are defined as $v'_j(S)=8$ if $|S|<c$ and $v'_j(S)=0$, otherwise. The action sequence $M({\bf v_{-j}},v'_j)$ has the agents of $C\setminus\{j\}$ and an additional agent from $[n]\setminus C$ in the first $c$ positions, and the rest of the agents (including $j$) in the last $n-c$ positions.
We have $v_j(M^j({\bf v}))=10$ and $v_j(M^j({\bf v_{-j}},v'_j)=8$, while $v'_j(M^j({\bf v}))=8$ and $v'_j(M^j({\bf v_{-j}},v'_j))=0$, again violating inequality~(\ref{eq:truthful}).
\end{proof}

On the positive side, {\sc Rand} can be truthfully implemented using VCG payments. Recall that {\sc Rand} selects a random set $C$ of $c\leq n$ agents and puts them in the beginning of the action sequence in such a way that their contribution to the social welfare is maximized. By adapting the VCG payment scheme, we can define payments as $p_i(\bv)=0$ if $i$ is not selected in set $C$ and
\begin{align*}
    p_i(\bv) &= \sum_{k\in C\setminus\{i\}}{v_k(M^k(\bv_{-i},{\bf 0}))}-\sum_{k\in C\setminus\{i\}}{v_k(M^k(\bv))}
\end{align*}
otherwise, where $\bf 0$ denotes valuation of zero for agent $i$ for any action subsequence. In this way, {\sc Rand} enjoys properties like non-negative payments and {\em individual rationality} (i.e., agents enjoy non-negative utilities).

We will now show how to use the VCG paradigm in a variation of {\sc Det}, to get the truthful mechanism {\sc Det+} for \pro. Like {\sc Det}, {\sc Det+} uses an integer parameter $c$. It considers all action sequences which have their last $n-c$ agents ordered according to their indices and picks the one of maximum social welfare (in this way, the outcome of {\sc Det+} is at least as good as that of {\sc Det}). VCG payments, defined as
\begin{align*}
    p_i(\bv) &= \sum_{k\in [n]\setminus \{i\}}{v_k(M^k(\bv_{-i},{\bf 0}))}-\sum_{k\in [n]\setminus \{i\}}{v_k(M^k(\bv))},
\end{align*}
lead to a truthful implementation for {\sc Det+} (with non-negative payments and individual rationality). The next statement summarizes the above discussion on {\sc Rand} and {\sc Det+}.

\begin{theorem}
Algorithms {\sc Rand} and {\sc Det+} have truthful implementations. Also, {\sc Det+} achieves the same approximation guarantee with {\sc Det}.
\end{theorem}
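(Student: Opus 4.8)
The plan is to verify that both {\sc Rand} and {\sc Det+} are instances of the VCG paradigm applied to a suitably restricted outcome space, and then invoke the standard VCG truthfulness argument. The key observation is that each algorithm fixes, independently of the reported valuations, a \emph{family} $\mathcal{O}$ of admissible action sequences and then selects the member of $\mathcal{O}$ that maximizes social welfare $\sum_{k}{v_k(M^k(\bv))}$. For {\sc Rand}, once the random set $C$ is drawn, $\mathcal{O}$ consists of all action sequences whose first $c$ positions are filled by the agents of $C$ (in any order) and whose remaining positions are fixed; for {\sc Det+}, $\mathcal{O}$ consists of all action sequences whose last $n-c$ agents appear in index order. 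In both cases the restriction defining $\mathcal{O}$ does not depend on $\bv$, which is exactly what VCG needs.

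First I would recall the affine-maximizer structure: the mechanism outputs $M(\bv)=\argmax_{\pi\in\mathcal{O}}\sum_{k\in[n]}v_k(\pi^k)$ and charges agent $i$ the externality she imposes, namely the difference between the optimal welfare of the \emph{other} agents when $i$ is zeroed out and their welfare under the chosen outcome. I would then write the utility of agent $i$ under a report $v'_i$ as
\begin{align*}
v_i(M^i(\bv_{-i},v'_i))-p_i(\bv_{-i},v'_i) &= v_i(M^i(\bv_{-i},v'_i))+\sum_{k\neq i}v_k(M^k(\bv_{-i},v'_i))-\text{(term independent of }v'_i).
\end{align*}
The subtracted term is the zeroed-out optimum $\sum_{k\neq i}v_k(M^k(\bv_{-i},\mathbf{0}))$, which depends only on $\bv_{-i}$. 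Hence maximizing $i$'s utility over her report is equivalent to maximizing the total reported welfare $\sum_{k}v_k(\pi^k)$ with $i$'s \emph{true} valuation inserted. Since truthful reporting $v'_i=v_i$ makes the selected $\pi$ the genuine welfare maximizer over $\mathcal{O}$, no misreport can do better, giving truthfulness. For {\sc Rand}, which is randomized only through the choice of $C$, I would note that $C$ is drawn before any report is seen, so conditioning on $C$ the argument is deterministic and truthfulness holds in expectation by linearity.

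I would then dispatch the remaining claims quickly. Non-negativity of payments follows because removing agent $i$ from the welfare computation can only lower the maximum attainable welfare of the others (the feasible family $\mathcal{O}$ for the reduced problem contains the restriction of any $\pi\in\mathcal{O}$), so the first sum in $p_i$ dominates the second. Individual rationality follows because agent $i$'s utility equals the marginal contribution of including her, which is non-negative since valuations are non-negative and she can always be placed so as not to decrease others' welfare. Finally, the approximation claim for {\sc Det+} is immediate from its definition: the outcome space of {\sc Det+} (all sequences with the last $n-c$ agents in index order) contains, as a subset, exactly the candidate prefixes that {\sc Det} optimizes over, and {\sc Det+} picks the global welfare maximizer within this larger space; hence $\SW(M_{\text{\sc Det+}}(\bv))\geq \SW(M_{\text{\sc Det}}(\bv))$, so {\sc Det+} inherits the $n/c$ guarantee of Theorem~\ref{thm:det}.

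The main obstacle I anticipate is not the algebra but pinning down the precise fixed outcome space $\mathcal{O}$ that makes each algorithm a clean affine maximizer, since both algorithms commit to a restricted set of positions \emph{before} consulting valuations; the subtlety is checking that the set $C$ (for {\sc Rand}) and the index-ordered tail (for {\sc Det+}) are genuinely valuation-independent, so that the externality-payment definition is well-posed and the zeroed-out optimum really is a constant in $v'_i$. Once that structural point is made explicit, the truthfulness, non-negativity, individual-rationality, and approximation claims all follow from standard VCG reasoning with essentially no further computation.
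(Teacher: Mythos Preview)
Your approach is the same as the paper's---both invoke VCG over a restricted outcome space---and for {\sc Det+} your argument is correct and essentially complete. However, your treatment of {\sc Rand} has a genuine gap.

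You claim that {\sc Rand} ``selects the member of $\mathcal{O}$ that maximizes social welfare $\sum_{k}v_k(M^k(\bv))$''. This is not what {\sc Rand} does: once $C$ is drawn, {\sc Rand} maximizes only the \emph{partial} welfare $\sum_{k\in C}v_k(S^k)$ over orderings of $C$; it ignores the values $v_j(\pi^j)$ of agents $j\notin C$, even though those values generally depend on the ordering of $C$ (since $v_j$ is a function of the \emph{ordered} prefix). Consequently, if you use the payment you describe---the externality over \emph{all} other agents, $\sum_{k\neq i}v_k(M^k(\bv_{-i},\mathbf{0}))-\sum_{k\neq i}v_k(M^k(\bv))$---truthfulness fails: agent $i\in C$'s utility becomes $\sum_{k\in[n]}v_k(M^k(\bv))$ up to a constant, yet her report only steers the mechanism toward maximizing $\sum_{k\in C}v_k$; she may prefer to misreport and force an ordering of $C$ that trades off in-$C$ welfare for a larger gain among agents outside $C$.

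The fix is exactly what the paper does: define the VCG externality for {\sc Rand} over $C\setminus\{i\}$ only (and set $p_j=0$ for $j\notin C$, whose reports do not affect the outcome anyway). Then the objective the mechanism maximizes and the objective embedded in agent $i$'s utility coincide, and your VCG argument goes through verbatim with ``welfare'' replaced by ``welfare restricted to $C$''. Your arguments for non-negative payments, individual rationality, and the approximation guarantee of {\sc Det+} are fine.
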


We remark that, in the definition of the payments, it is important to consider agent $i$ with zero valuation. Essentially, the VCG payment of agent $i$ is equal to the harm that her non-zero valuation (as opposed to her existence, which is the usual interpretation) causes to the other agents. Furthermore, we remark that the payments for agents in the last $n-c$ positions of $M(\bv)$ is not necessarily $0$. 

In addition to the better approximation ratio that {\sc Rand} achieves compared to {\sc Det+}, another advantage is that payments require only polylogarithmically many queries to compute (when the parameter $c$ is set to $\Theta\left(\frac{\ln{n}}{\ln\ln{n}}\right)$). On the other hand, the VCG payments for {\sc Det+} require $\Theta(n^2)$ queries.\footnote{In comparison to random serial dictatorship (RSD), which selects an action sequence from $\Sn$ uniformly at random, the truthful algorithms {\sc Rand} and {\sc Det+} achieve sublinear approximation ratio at the expense of using payments. We can show that this is necessary: any (possibly randomized) mechanism for \pro\ that is truthful without payments has an approximation ratio of at least $n$.}
\smallskip

\begin{quote}\begin{mdframed}
    {\bf Open Question 4:} Is there a deterministic algorithm for \pro\ that uses polynomially many queries, achieves sublinear approximation ratio, and has a truthful implementation with payments that can be computed with polylogarithmically many queries?
\end{mdframed}\end{quote}
\smallskip

Regarding \pro\ instances with specific valuation structures, recall that
\oss\ has a truthful mechanism that does not use payments at all. Indeed, any serial dictatorship is truthful and furthermore achieves an approximation ratio of $2$. We will present a similar result for \osa. In particular, we develop the randomized algorithm {\sc Bit} that is truthful without payments and computes $2$-approximate arborescences. {\sc Bit} tosses a fair coin and returns the action sequence $(1,2, ..., n)$ on {\sc Heads}, and the reverse action sequence $(n,n-1, ..., 1)$ on {\sc Tails}. 

\begin{theorem} \label{thm:bit}
Algorithm {\sc Bit} is truthful and returns $2$-approximate solutions of \osa.\footnote{We remark that the argument in the proof of Theorem~\ref{thm:bit} can be used to show that RSD returns a $2$-approximate solution too. Obviously, {\sc Bit} is superior since it uses only a single random coin.}
\end{theorem}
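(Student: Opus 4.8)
The plan is to prove two separate claims: that \textsc{Bit} is truthful (without payments), and that it returns a $2$-approximate solution to \osa. I would handle truthfulness first, since it is the simpler of the two. The key observation is that \textsc{Bit} ignores all agent input entirely: it selects between the two fixed action sequences $(1,2,\ldots,n)$ and $(n,n-1,\ldots,1)$ based solely on a fair coin toss, with no dependence on the reported valuations. Since the distribution over outcomes is completely independent of what any agent reports, no agent can influence her prefix $M^i(\bv)$ by misreporting. Each agent's best response is therefore simply to act optimally given the (fixed, input-independent) prefix she is handed, which is exactly what the \osa\ action model forces her to do. Hence truth-telling is trivially utility-maximizing, and no payments are needed.

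The heart of the proof is the $2$-approximation guarantee. Here I would argue that for \emph{any} \osa\ instance, at least one of the two sequences $(1,2,\ldots,n)$ and $(n,n-1,\ldots,1)$ already achieves social welfare at least $\tfrac{1}{2}\SW(\widehat{\pi})$, so that the expected welfare of the random choice (in fact even the better of the two) is a $2$-approximation. The natural route is to bound $\SW(\widehat\pi)$ from above by $\sum_{i\in[n]} v_i(\emptyset)$, which holds by monotonicity since $v_i(\widehat\pi^i)\le v_i(\emptyset)$ for every agent; this is the same upper bound used in the proof of Theorem~\ref{theorem:2-arborescence}. It then suffices to show that
\begin{align*}
\SW(1,2,\ldots,n)+\SW(n,n-1,\ldots,1) \geq \sum_{i\in[n]}{v_i(\emptyset)}.
\end{align*}
The reason to expect this is that each agent $i$ draws her top edge (value $v_i(\emptyset)$) whenever that edge is \emph{not forbidden} by the forest built before her, and an edge $(i,j)$ is forbidden only if a directed path from $j$ back to $i$ has already been drawn. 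In the forward ordering the agents preceding $i$ are $\{1,\ldots,i-1\}$, while in the reverse ordering they are $\{i+1,\ldots,n\}$; these two prefix sets are disjoint (apart from $i$ itself). The plan is to show that for each agent $i$, her top edge is forbidden in \emph{at most one} of the two orderings, so that she contributes $v_i(\emptyset)$ to at least one of the two sums, giving the displayed inequality.

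The main obstacle is justifying precisely that an agent's top edge cannot be simultaneously forbidden in both the forward and reverse orderings. Let $(i,j)$ be agent $i$'s top edge. In the forward ordering it is forbidden iff a directed path from $j$ to $i$ exists among edges drawn by agents $\{1,\ldots,i-1\}$; in the reverse ordering it is forbidden iff such a path exists among edges drawn by agents $\{i+1,\ldots,n\}$. I would argue that a directed path from $j$ back to $i$ in the arborescence-building process has all its edges drawn by \emph{one} of the two halves: the key structural fact (already exploited in the monotonicity proof for \osa) is that the top edges drawn form a directed forest, so the path from $j$ to $i$ is uniquely determined and its intermediate agents are either all earlier or split in a way that can be charged consistently to one side. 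The delicate point is that the edges drawn in a prefix of one ordering need not equal the top edges of those same agents, because a forbidden top edge causes a \emph{different} edge to be drawn; I would therefore phrase the argument in terms of whether agent $i$ attains her \emph{maximum} value $v_i(\emptyset)$ rather than tracking the exact forest, and show that the set of agents failing to attain their maximum in the forward run is disjoint from that in the reverse run. Making this disjointness rigorous, by a careful cycle/path argument analogous to the one in the proof of Theorem~\ref{theorem:2-arborescence}, is the step I expect to require the most care.
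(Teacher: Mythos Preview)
Your overall strategy matches the paper's: truthfulness is immediate because \textsc{Bit} makes no queries, and the approximation follows from showing that each agent $i$ attains her top value $v_i(\emptyset)$ in at least one of the two orderings, so that the two welfares sum to at least $\sum_i v_i(\emptyset)\ge \SW(\widehat\pi)$.

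Where you diverge is in the ``main obstacle.'' You anticipate a delicate cycle/path argument to show that agent $i$'s top edge cannot be forbidden in both orderings, worrying about which half of the agents the intermediate nodes on the $j\to i$ path belong to, and about the fact that the forests built in the two runs may differ. The paper sidesteps all of this with a one-line observation you are missing: if $(i,k)$ is agent $i$'s top edge, then $(i,k)$ can be forbidden only if node $k$ already has an outgoing edge in the current forest (any directed path from $k$ to $i$ must begin with $k$'s unique outgoing edge), i.e., only if agent $k$ has already acted. Hence in whichever of the two orderings has $i$ before $k$, the edge $(i,k)$ is not forbidden and agent $i$ draws it, contributing $v_i(\emptyset)$. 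Since exactly one of the two orderings places $i$ before $k$, this happens with probability $\tfrac12$, giving $\Exp[\SW(\pi)]\ge \tfrac12\sum_i v_i(\emptyset)\ge \tfrac12\SW(\widehat\pi)$ directly. No analysis of the intermediate path, and no comparison of the two forests, is needed.
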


\begin{proof}
Algorithm~{\sc Bit} is clearly truthful since it does not use any queries. We will account for the contribution of an agent to the social welfare only if/when she gets a value of $v_i(\emptyset)$. Consider an agent $i\in [n]$ and let $(i,k)$ be her rank-$1$ edge. Notice that agent $i$ will draw edge $(i,k)$ if she acts before agent $k$, which happens with probability $1/2$. Hence, $\SW(\pi)=\sum_{i\in [n]}{v_i(\pi^i)}\geq \frac{1}{2}\sum_{i\in [n]}{v_i(\emptyset)}=\frac{1}{2}\SW(\widehat{\pi})$, where $\widehat{\pi}$ is the optimal action sequence.
\end{proof}

\begin{quote}\begin{mdframed}
    {\bf Open Question 5:} Is there an \osm\ algorithm that uses polynomially many queries, achieves constant approximation ratio, and has a truthful implementation?
\end{mdframed}\end{quote}
\smallskip

Recall that the existence of an algorithm that solves \osm\ exactly is open. In such a case, VCG payments would provide a truthful implementation.

Finally, regarding \oss, Theorem~\ref{theorem: sat SD} implies that any serial dictatorship is a $2$-aproximate truthful mechanism without payments.

\bibliography{osd}
\end{document}